\documentclass[12pt,draftcls,onecolumn]{IEEEtran}

\usepackage{stfloats}
\usepackage{amsfonts}
\usepackage{amssymb}
\usepackage{amsthm}
\usepackage{cite}
\usepackage[cmex10]{amsmath}
\usepackage{float}
\usepackage{color}
\usepackage{stfloats,fancyhdr}
\usepackage{amsmath,bm}
\usepackage{algorithm}
\usepackage{algorithmic}
\usepackage{multirow}
\usepackage{changepage}
\usepackage[normalem]{ulem}
\usepackage{amsthm}
\usepackage{balance}

\newtheorem{theorem}{Theorem}
\newtheorem{lemma}{Lemma}
\newtheorem{proposition}{Proposition}

\newtheorem{assumption}{Assumption}
\newtheorem{remark}{Remark}


\ifCLASSINFOpdf
\usepackage[pdftex]{graphicx}
\DeclareGraphicsExtensions{.pdf,.jpeg,.png}
\else
\usepackage[dvips]{graphicx}
\DeclareGraphicsExtensions{.eps}
\fi

\usepackage{subfigure}
\usepackage{fancybox,dashbox}
\usepackage{authblk}
\usepackage{tabularx}


\begin{document}
		
\title{Real-Time Remote Estimation with Hybrid~ARQ in Wireless Networked Control}

\author{\IEEEauthorblockN{Kang Huang, Wanchun Liu$^\dagger$, Mahyar Shirvanimoghaddam, Yonghui Li and Branka Vucetic}
	\thanks{\setlength{\baselineskip}{13pt} \noindent The authors are with School of Electrical and Information Engineering, The University of Sydney, Sydney NSW 2006, Australia.
		(emails: \{kang.huang,\ wanchun.liu,\ mahyar.shirvanimoghaddam,\ yonghui.li,\ branka.vucetic\}@sydney.edu.au).
		$^\dagger${Wanchun Liu is the corresponding author}.
		Part of the paper will be presented at Proc. IEEE ICC'2019~\cite{Kang2019ICC}.
	}
}

\maketitle
\vspace{-1cm}
\begin{abstract}
	Real-time remote estimation is critical for mission-critical applications including industrial automation, smart grid and tactile Internet.
	In this paper, we propose a hybrid automatic repeat request (HARQ)-based real-time remote estimation framework for linear time-invariant (LTI) dynamic systems.
	Considering the estimation quality of such a system, there is a fundamental tradeoff between the reliability and freshness of the sensor's measurement transmission.
	We formulate a new problem to optimize the sensor's online transmission control policy for static and Markov fading channels, which depends on both the current estimation quality of the remote estimator and the current number of retransmissions of the sensor, so as to minimize the long-term remote estimation mean squared error (MSE).
	This problem is non-trivial.
	In particular, it is challenging to derive the condition in terms of the communication channel quality and the LTI system parameters, to ensure a bounded long-term estimation MSE.
	We derive a sufficient condition of the existence of a stationary and deterministic optimal policy that stabilizes the remote estimation system and minimizes the MSE.
	Also, we prove that the optimal policy has a switching structure, and accordingly derive a low-complexity suboptimal policy.
	Numerical results show that the proposed optimal policy significantly improves the performance of the remote estimation system compared to the conventional non-HARQ policy.
	
\end{abstract}
\begin{IEEEkeywords}
Age of information, hybrid ARQ, remote estimation, stability analysis, wireless networked control systems
\end{IEEEkeywords}
\newpage
	\section{Introduction}	
	Real-time remote estimation is critical for networked control applications such as
	industrial automation, smart grid, vehicle platooning, drone swarming, immersive virtual reality (VR) and the tactile Internet~\cite{Antonakoglou2018towards}.
	For such real-time applications, high quality remote estimation of the states of dynamic processes over unreliable links is a major challenge.
	The sensor's sampling policy, the estimation scheme at a remote receiver, and the communication protocol for state-information delivery between the sensor and the receiver should be designed jointly.

	
	To enable the optimal design of wireless {remote estimation}, the performance metric for the remote estimation system needs to be selected properly. 
	For some applications, the model of the dynamic process under monitoring is unknown and the receiver is not able to estimate the current state of the process based on the previously received states, i.e., a state-monitoring-only scenario~\cite{kaul2012real}. In this scenario, the performance metric is the age-of-information (AoI), which reflects how old the freshest received sensor measurement is, since the moment that measurement was generated at the sensor~\cite{kaul2012real}.
	However, in practice, most of the dynamic processes are time-correlated, and the state-changing rules can be known by the receiver to some extent. 
	Therefore, the receiver can estimate the current state of the process based on the previously received measurements and the model of the dynamic process (see e.g., \cite{schenato2008optimal,sun2017remote}), especially when the transmission of the packet that carries the current sensor measurement is failed or delayed. 
	In this sense, the estimation mean squared error (MSE) is the appropriate performance metric.	
	

	From a communication protocol design perspective, we naturally ask: does a sensor need retransmission or not for mission-critical real-time remote estimation?
	Retransmission is required by conventional communication systems with non-real-time backlogged data to be perfectly delivered to the receivers.
	Also, energy-constrained remote estimation systems and the ones with low sampling rate can also benefit from retransmissions, see e.g.,~\cite{liu2016energy} and~\cite{Demirel2015to}. 
	It was shown in~\cite{gupta2010estimation} that retransmissions cannot improve the performance of a mission-critical real-time remote estimation system, which is not mainly constrained by energy nor sampling rate,
	as it is a waste of transmission opportunity to transmit an out-of-date measurement instead of the current one.
	However, this is true only when a retransmission has the same success probability as a new transmission, e.g., with the standard automatic repeat request (ARQ) protocol.
	Note that a hybrid ARQ (HARQ) protocol, e.g., with a chase combining (CC) or incremental redundancy (IR) scheme, is able to effectively increase the successful detection probability of a retransmission by combining multiple copies from previously failed transmissions~\cite{caire2001throughput}.
	Therefore, a HARQ protocol has the potential to improve the performance of real-time remote estimation.
	

	In the paper, we introduce HARQ into real-time remote estimation systems and optimally design the sensor's transmission policy to minimize the estimation MSE.
	Note that there is a fundamental tradeoff between the reliability and freshness of the sensor's measurement transmission. When a failed transmission occurs, the sensor can either retransmit the previous old measurement such that the receiver can obtain a more reliable old measurement, or transmit a new but less reliable measurement.
	The main contributions of the paper are summarized as follows:
	\begin{itemize}
		\item We propose a novel HARQ-based real-time remote estimation system of time-correlated random processes, where the sensor makes online decision to send a new measurement or retransmit the previously failed~one depending on both the current estimation quality of the receiver and the current number of retransmissions of the sensor.
		\item We formulate the problem to optimize the sensor's transmission control policy so as to maximize the long-term performance of the receiver in terms of the average MSE for both the static and Markov fading channels.
		Since it is not clear whether the long-term average MSE can be bounded or not,
		we derive an elegant sufficient condition in terms of the transmission reliability provided by the HARQ protocol and parameters of the process of interest
		to ensure that an optimal policy exists and stabilizes the remote estimation system.
		\item We derive a structural property of the optimal policy, i.e., the optimal policy is a switching-type policy, and give an easy-to-compute suboptimal policy.
		Our numerical results show that the proposed HARQ-based optimal and suboptimal transmission control policies  significantly improve the system performance compared to the conventional non-HARQ policy, under the setting of practical system parameters. 
		
	\end{itemize}
	
The remainder of this paper is organized as follows.
Section~\ref{section2} presents the proposed HARQ-based remote estimation system.
Section~\ref{sec:optimal transmission control}  analyzes the HARQ-based transmission-control policy, and formulate the optimal transmission control problem. 
Sections~\ref{sec:static channel} and~\ref{sec:Markov}  investigate the optimal transmission control policies of the static and Markov channels, respectively.
Section~\ref{sec:num} numerically presents the optimal, suboptimal and benchmark polices for both static and Markov channels, and their average MSE performance. 
Finally, Section~\ref{sec:con} concludes the paper.

\emph{Notations:} $\mathbb{P}\left[\mathcal{A}\right]$ is the probability of the event $\mathcal{A}$. $\mathbb{E}\left[A\right]$ is the expectation of the random variable $A$.
$(\cdot)^T$ is the matrix transpose operator. $\| \mathbf{v} \|_1$ is the sum of the vector $\mathbf{v}$'s elements. $\text{Tr}(\cdot)$ is the trace operator. $\text{diag}\{v_1,v_2,...,v_K\}$ denotes the diagonal matrix with the diagonal elements $\{v_1,v_2,...,v_K\}$. $\mathbb{N}$ and $\mathbb{N}_0$ denote the sets of positive and non-negative integers, respectively. $\left[u\right]_{B \times B}$ denotes the $B \times B$ matrix with identical element~$u$.

\section{System Model} \label{section2}
We consider a basic system setting that a \emph{smart sensor} periodically samples, pre-estimates and sends its local estimation of a dynamic process to a remote receiver through a wireless link with packet dropouts, as illustrated in Fig.~\ref{retransmit_system_model}.
     \begin{figure}[t]
	\centering\includegraphics[scale=0.7]{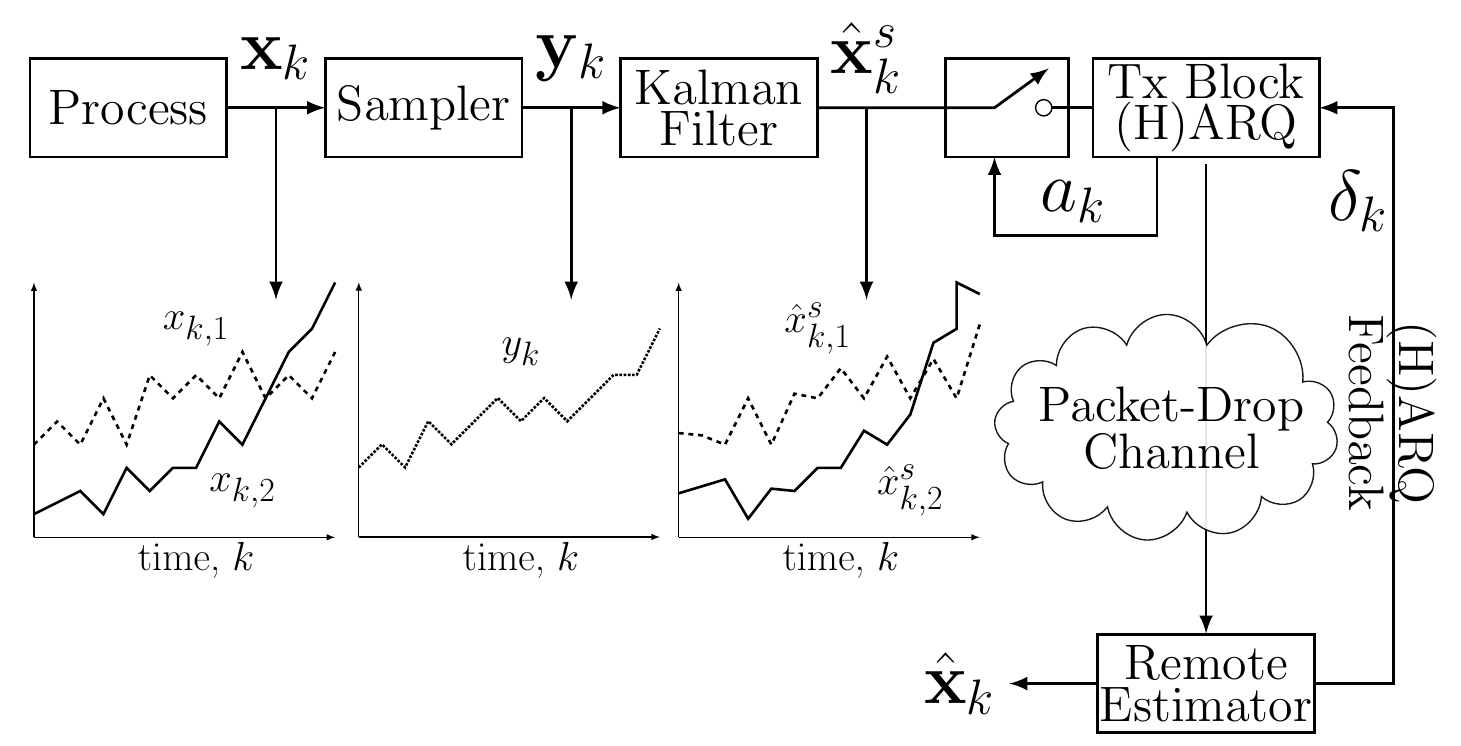}
	\vspace{-0.5cm}
	\caption{Proposed remote estimation system with HARQ, where $\mathbf{x}_k\triangleq \left[x_{k,1},x_{k,2}\right]^T$ is the two-dimensional state vector of the dynamic process, and $\hat{\mathbf{x}}^s \triangleq \left[\hat{x}^s_{k,1},\hat{x}^s_{k,1}\right]^T$.}
	\label{retransmit_system_model}
	\vspace{-0.5cm}
\end{figure}
\subsection{Dynamic Process Modeling}
   We consider a general discrete LTI model for the dynamic process as (see e.g., \cite{schenato2008optimal,shi2012optimal,yang2013schedule})
   \begin{equation} \label{sys}
	\begin{aligned}
	   \mathbf{x}_{k+1} &= \mathbf{A} \mathbf{x}_k + \mathbf{w}_k,\\
	   \mathbf{y}_k &= \mathbf{C}\mathbf{x}_k + \mathbf{v}_k,
	\end{aligned}
   \end{equation}
     where 
     the discrete time steps are determined by the sensor's sampling period $T_s$,
     $\mathbf{x}_k \in \mathbb{R}^n$ is the process state vector, $\mathbf{A} \in \mathbb{R}^{n \times n}$ is the state transition matrix, $\mathbf{x}_k \in \mathbb{R}^m$ is the measurement vector of the smart sensor attached to the process, $\mathbf{C} \in \mathbb{R}^{m \times n}$ is the measurement matrix\footnote{Note that $C$ is not necessary to be full rank~\cite{maybeck1979stochastic}, as illustrated in Fig.~\ref{sys}, i.e., $\mathbf{x}_k$ is a two-dimensional (2D) signal, while the measurement $y_k$ is one-dimensional. After Kalman filtering, we have a 2D $\hat{\mathbf{x}}^s_k$.}, $\mathbf{w}_k \in \mathbb{R}^n$ and $\mathbf{v}_k \in \mathbb{R}^m$ are the process and measurement noise vectors, respectively. We assume $\mathbf{w}_k$ and $\mathbf{v}_k$ are independent and are identically distributed (i.i.d.) zero-mean Gaussian processes with corresponding covariance matrices $\mathbf{Q}_w$ and $\mathbf{Q}_v$, respectively. The initial state $\mathbf{x}_0$ is zero-mean Gaussian with covariance matrix $\mathbf{\Sigma}_0$.     
     To avoid trivial problems, we assume that $\rho^2(\mathbf{A}) > 1$, where $\rho^2(\mathbf{A})$ is the maximum squared eigenvalue of $\mathbf{A}$~\cite{shi2012scheduling}. 

\subsection{State Estimation at the Smart Sensor}
Since the sensor's measurements are noisy, the smart sensor with sufficient computation and storage capacity is required to estimate the state of the process, $\mathbf{x}_k$, using a Kalman filter~\cite{shi2012optimal,yang2013schedule}, which gives the minimum estimation MSE, based on the current and previous raw measurements:
%
%
%
%
\begin{subequations}\label{sub:1}
\begin{align}
\mathbf{x}_{k|k-1}^s&=\mathbf{A} \mathbf{x}_{k-1|k-1}^s\\
\mathbf{P}_{k|k-1}^s&=\mathbf{A} \mathbf{P}_{k-1|k-1}^s \mathbf{A}^T+\mathbf{Q}_w\\
\mathbf{K}_k&=\mathbf{P}_{k|k-1}^s \mathbf{C}^T(\mathbf{C} \mathbf{P}_{k|k-1}^s \mathbf{C}^T+\mathbf{Q}_v)^{-1}\\
\mathbf{x}_{k|k}^s&=\mathbf{x}_{k|k-1}^s+\mathbf{K}_k(\mathbf{y}_{k}-\mathbf{C} \mathbf{x}_{k|k-1}^s)\\
\mathbf{P}_{k|k}^s&=(\mathbf{I}-\mathbf{K}_{k} \mathbf{C})\mathbf{P}_{k|k-1}^s
\end{align}
\end{subequations}
where $\mathbf{I}$ is the $m \times m$ identity matrix, $\mathbf{x}^s_{k|k-1}$ is the priori state estimation, $\mathbf{x}^s_{k|k}$ is the posteriori state estimation at time $k$, $\mathbf{K}_k$ is the Kalman gain, $\mathbf{P}_{k|k-1}$ and $\mathbf{P}_{k|k}$ represent the priori and posterior error covariance at time $k$, respectively. The first two equations present the prediction steps while the last three equations present the updating steps~\cite{maybeck1979stochastic}.
Note that $\mathbf{x}^s_{k|k}$ is the output of the Kalman filter at time $k$, i.e., the pre-filtered measurement of~$\mathbf{y}_k$, with the estimation error covariance $\mathbf{P}_{k|k}^s$.

As we focus on the effect of communication protocols on the stability and quality of the remote estimation, we assume that the local estimation is stable as follows~\cite{shi2012optimal,yang2013schedule}.

\begin{assumption}
	\normalfont
	The local Kalman filter of system \eqref{sys} is stable with the system parameters $\{\mathbf{A}, \mathbf{C}, \mathbf{Q}_w\}$\footnote{The rigorous stability condition in terms of $\{\mathbf{A}, \mathbf{C}, \mathbf{Q}_w\}$ is given in~\cite{maybeck1979stochastic}.}, i.e.,
	the error covariance matrix $\mathbf{P}_{k|k}^s$ converges to a finite matrix~$\bar{\mathbf{P}}_0$ when $k$ is sufficiently large.
\end{assumption}

\emph{In the rest of the paper, we assume that the local Kalman filter operates in the steady state~\cite{shi2012optimal,yang2013schedule}, i.e., $\mathbf{P}_{k|k}^s = \bar{\mathbf{P}}_0$. To simplify the notation, we use $\hat{\mathbf{x}}_k^s$ to denote the sensor's estimation, $\mathbf{x}_{k|k}^s$.}
       
\subsection{Wireless Channel}   \label{sec:ARQ} 
We consider both a static channel and a finite-state time-homogeneous Markov fading channel.
For the static channel, the channel power gain $h_k$ does not change with time, i.e., $h_k=h>0$, $\forall k$.
For the Markov channel, the channel power gain $h_k$ remains constant during the $k$th time slot and changes slot by slot, where $h_k>0, \forall k$. We assume that the Markov channel has $B$ states, i.e, $\mathcal{U}\triangleq\{u_1,...,u_B\}$, and $h_k \in \mathcal{U}$.
The probability of transition from state $i$ to state $j$ is $p_{i,j}$,
and the matrix of channel state transition probability is given as
\begin{equation} \label{P_matrx}
\mathbf{\Pi} \triangleq \begin{bmatrix}
p_{1,1}  & \cdots &	p_{B,1}\\
\vdots  & \ddots &	\vdots\\
p_{1,B}  & \cdots &	p_{B,B}
\end{bmatrix}.
\end{equation}

We assume that the channel state information is available at both the sensor and the receiver, see e.g.~\cite{markovchannel} and the references therein.

\subsection{HARQ-Based Communication}  
The sensor's estimation is quantized into $(L \times R)$ bits and then coded into a packet with $L$ symbols, where the symbol duration is $T'_s$ and $R$ is the coding rate. 
We assume that the packet length is equal to the sampling period, i.e., $L T'_s = T_s$.
In other words, the sensors perform the next sampling once the current measurement-carrying packet has been delivered to the receiver.
Thus, there exists a unit packet-transmission delay between the sensor and the receiver.
For example, the sensor's measurement at the beginning of time slot~$k$, ${\mathbf{y}}_k$, is filtered and sent to the receiver before time slot $(k+1)$. It is assumed that the sensor and the receiver are perfectly synchronized.

The acknowledgment/negative-acknowledgment
(ACK/NACK) message is fed back from the receiver to the sensor perfectly without any delay, when the packet detection succeeds/fails.
If an ACK is received by the sensor, it will send a new (pre-filtered) measurement in the next time slot. If a NACK is received, the sensor can decide whether to retransmit the unsuccessfully transmitted measurement using its ARQ protocol or to send a new measurement.
We introduce the binary variable $\gamma_k \in \{1,0\}$ to indicate the successful and failed packet detection in time slot $k$, respectively.

For the standard ARQ protocol, the receiver discards the failed packets, and the sensor simply resends the previously failed packet if a retransmission is required.
Thus, the successful packet detection probability~at each time is independent of the current number of retransmissions.

For a HARQ protocol, the receiver buffers the incorrectly received packets,
and the detection of the retransmitted packet will utilize all the buffered related packets.
In the CC-HARQ case, the sensor resends the previously failed packet if a retransmission is required, and the receiver optimally combines (i.e., the maximal ratio combining method) 
all the previously received replicas of the packet of the same message and make a detection.
In the IR-HARQ case, each retransmitted packet is an incremental redundancy of the same message, and the receiver treats the sequence of all the buffered replicas as a long codeword to detect the transmitted massage.

Given the channel power gains, the probability that the message cannot be detected within $r$ transmission attempts started from time slot $k-(l-1)$ is given as~\cite{Polyanskiy} and \cite{li2017throughput}
\begin{align}\label{HARQ_error}
&\mathbb{P}\left[\xi_k = 0 \vert l \text{ transmission attampts} \right] \\
\label{approx_HARQ}
&\approx
\begin{cases}
Q \bigg(\frac{{L^{\frac{1}{2}}}(\log_{2} (1+\sum_{i=0}^{l-1} h_{k-i} \mathsf{SNR})+{\frac{\log_{2} L}{L}}-R)}{\sqrt{1-\frac{1}{(1+\sum_{i=0}^{l-1} h_{k-i} \mathsf{SNR})^2}}\log_{2} e}\bigg), &\text{CC-HARQ}\\
Q \bigg(\frac{L^{\frac{1}{2}}(\sum_{i=0}^{l-1}\log_{2} (1+h_{k-i} \mathsf{SNR})+{\frac{\log_{2} rL}{L}}-R)}{\sqrt{\sum_{i=0}^{l-1}(1-\frac{1}{(1+h_{k-i} \mathsf{SNR})^2})}\log_{2} e}\bigg), &\text{IR-HARQ}
\end{cases}
\end{align}
where $\xi_k = 0$ is the event of failed detection within $r$ transmissions in time slot $k$, $\mathsf{SNR}$ is the signal-to-noise ratio at the receiver with unit channel power gain, and the approximation \eqref{approx_HARQ} is based on the results of the finite-blocklength information theory for AWGN channel, see e.g.,~\cite{Polyanskiy} and \cite{li2017throughput}.

\subsection{State Estimation at the Receiver}\label{sec:estimation}
We assume that 
the latest sensor's estimation that is available at the receiver at the beginning of time slot $k$, i.e., $\hat{\mathbf{x}}^s_{t_k}$, was generated at the beginning of time slot $t_k$.
Therefore, the receiver-side AoI at the beginning of time slot $k$ can be defined as~\cite{kaul2012real}
\begin{equation}\label{def_q}
q_k \triangleq k-t_k,\forall k,
\end{equation}
and $q_k\geq 1$.

As the latest available sensor's estimation was generated $q_k$-step earlier, the receiver needs to estimate the current state based on the dynamic process model~\eqref{sys}. 
The receiver's MSE optimal estimator at the beginning of time slot $k$ is given~as~\cite{schenato2008optimal} 
\begin{equation} \label{general_estimater}
\hat{\mathbf{x}}_k = 
\mathbf{A}^{k-t_k}\hat{\mathbf{x}}_{t_k}^s
=\mathbf{A}^{q_k}\hat{\mathbf{x}}_{t_k}^s,
\end{equation}
and the corresponding estimation error covariance is
\begin{align} \label{covariance1}
\mathbf{P}_k 
&\triangleq  \mathbb{E}\left[(\mathbf{x}_k-\hat{\mathbf{x}}_k)(\mathbf{x}_k-\hat{\mathbf{x}}_k)^T\right]\\
&=f^{q_{k}}(\bar{\mathbf{P}}_0) \label{general_form}
\end{align}
where \eqref{general_form} is obtained by substituting \eqref{general_estimater} and \eqref{sys}  into \eqref{covariance1},
$f(\mathbf{X})\triangleq \mathbf{AXA}^T+\mathbf{Q}$, $f^{n+1}(\cdot)  \triangleq f (f^{n}(\cdot))$ when $n\geq 1$, and $f^{1}(\cdot) \triangleq f(\cdot)$.	
Note that $\mathbf{P}_k$ takes value from a countable infinity set~\cite{shi2012scheduling}, i.e., $\mathbf{P}_k \in \{f(\bar{\mathbf{P}}_0),f^2(\bar{\mathbf{P}}_0),\cdots\}$.

The receiver's estimation MSE at the beginning of time slot $k$ is $\text{Tr}\left(\mathbf{P}_k\right) $.
Note that the operator $\text{Tr}\left(f^n(\cdot)\right)$ is monotonically increasing with respect to (w.r.t.) $n$, i.e.,  $\text{Tr}\left(f^{n_1}(\bar{\mathbf{P}}_0)\right) \leq \text{Tr}\left(f^{n_2}(\bar{\mathbf{P}}_0)\right)$ if $\rho^2(\mathbf{A})>1$ and $1 \leq n_1 \leq n_2$ (see Lemma 3.1 in \cite{shi2012scheduling}).

\begin{remark}
From \eqref{general_form}, the estimation MSE is a non-linear function of the AoI, and thus, $q_k$ can also be treated as the estimation quality indicator
\end{remark}

\subsection{Performance Metric}
The long-term average MSE of the dynamic process is defined as
\begin{equation}\label{cost_function}
\limsup_{K\to\infty}\frac{1}{K}\sum_{k=1}^{K} \mathbb{E}\left[\text{Tr}\left(\mathbf{P}_k\right)\right],
\end{equation}
where $\limsup_{K\rightarrow \infty}$ is the limit superior operator.

\section{Optimal Transmission Control: Analysis and Problem Formulation}\label{sec:optimal transmission control}
For the standard ARQ,
as the chance of the successful detection of a new transmission and that of a retransmission are the same, 
\emph{the optimal policy is to always transmit the current sensor's estimation, i.e., a non-retransmission policy~\cite{gupta2010estimation}.}
For a HARQ protocol, the probability of successful packet detection in time slot $k$, depends on the number of consecutive transmission attempts of the original message and the experienced channel conditions.
Since a new transmission is less reliable than a retransmission, there exists an inherent trade-off between retransmitting previously failed local state estimation with a higher success probability, and sending the current state estimation with a lower success probability.
Therefore, when a packet detection error occurs, the sensor needs to optimally make a decision on whether to retransmit it or to start a new transmission.

\subsection{Transmission-Control Policy}
Let $a_k \in \{0,1\}$ be the sensor's decision variable at time~$k$ as illustrated in Fig.~\ref{sys}. If $a_k=0$, the sensor sends the new measurement to the receiver in time slot $k$; otherwise, it retransmits the  unsuccessfully transmitted measurement. It is clear that $a_k=0$, if the the packet transmitted in time slot $(k-1)$ was successful. 

Let $r_k$ denote the number of consecutive transmission attempts before time slot $k$.
As $r_k$ only depends on the sensor's transmission-control policy, it has the updating rule~as
\begin{equation} \label{retransmission time}
r_k =\begin{cases}
1, &\mbox{if } a_{k-1} = 0 \\
r_{k-1}+1, &\mbox{otherwise}, 
\end{cases}
\end{equation}
where $r_k \geq 1, \forall k$.
From the definition of the estimation quality indicator~\eqref{def_q}, the updating rule of $q_k$ is given as
\begin{equation} \label{q_envolution_1}
q_{k} = 
\begin{cases}
r_{k}, & \gamma_{k-1} =1\\
q_{k-1}+1, & \gamma_{k-1} = 0,
\end{cases}
\end{equation}
where $q_k \geq 1, \forall k$.
As the estimation quality indicator depends on the current number of transmission attempts and also the control policy, plugging \eqref{retransmission time} into \eqref{q_envolution_1}, we further have
\begin{equation} \label{q_envolution_2}
q_{k} = 
\begin{cases}
1, & a_{k-1} = 0, \gamma_{k-1} =1\\
r_{k-1}+1, & a_{k-1} = 1, \gamma_{k-1} = 1\\
q_{k-1}+1, & \gamma_{k-1} = 0.
\end{cases}
\end{equation}

\subsection{Packet Error Probability with Online Transmission Control}
If the sensor decides to transmit a new measurement in time slot $k$, i.e., $a_k=0$, the packet error probability in time slot $k$ is obtained directly from \eqref{HARQ_error} as
\begin{align}\label{error_1}
\mathbb{P}\left[\gamma_k = 0 \vert a_k=0 \right] 
&=\mathbb{P}\left[\xi_k = 0 \vert l = 1\right].
\end{align}
If a retransmission decision has been made, i.e., $a_k =1$, the packet error probability based on \eqref{HARQ_error} can be obtained as
\begin{align} 
\mathbb{P}\left[\gamma_k = 0 \vert a_k=1\right] 
&=\mathbb{P}\left[\gamma_k = 0 \vert \gamma_{k-1},...,\gamma_{k-r_k}= 0 \right] \\
&=\frac{\mathbb{P}\left[\gamma_k, \gamma_{k-1},...,\gamma_{k-r_k}= 0  \right] }{\mathbb{P}\left[\gamma_{k-1},...,\gamma_{k-r_k}= 0  \right] }\\\label{error_2}
&=\frac{\mathbb{P}\left[\xi_k = 0 \vert l = r_k+1 \right] }{\mathbb{P}\left[\xi_{k-1} = 0 \vert l = r_k \right] }.
\end{align}

\underline{In the Markov channel scenario}, we assume that the packet error probability in \eqref{error_2} is a function of the current channel power gain $h_k$, and the state indicator $\mathbf{\Omega}_k$ of the previously experienced $r_k$ channel states, which does not rely on the order of the channel states\footnote{This assumption is in line with the approximation in \eqref{approx_HARQ}.}.
To be specific, we define
$\mathbf{\Omega}_k \triangleq [r_1(k),r_2(k),...,r_B(k)]$, where $r_i(k)$ is the occurrence number of the channel state with channel power gain $u_i$, during time slots $k-r_k$ to $k-1$. 
In other words, $\mathbf{\mathbf{\Omega}}_k$ is a sorted counter of the relevant historical channel states, and $\mathbf{\Omega}_k\in \mathbb{N}^B_0\backslash\mathbf{0}$.
By introducing the channel state index $\Xi_k \in \{1,\cdots,B\}$, i.e., $h_k =u_{\Xi_k} \in \mathcal{U}$,
$\mathbf{\Omega}_k$ is updated as
\begin{equation} \label{Markov_channel_evolution}
\mathbf{\Omega}_k=\begin{cases}
\mathbf{1}_{\Xi_{k-1}}, & \mbox{if } a_{k-1} = 0\\
\mathbf{\Omega}_{k-1}+ \mathbf{1}_{\Xi_{k-1}}, & \mbox{if } a_{k-1} = 1
\end{cases}
\end{equation}
where $\mathbf{1}_i \triangleq [\underbrace{\overbrace{0,\cdots,0,1}^{i},0,\cdots,0}_{B}]$.

Therefore, the packet error probability in \eqref{error_1} and \eqref{error_2} can be uniformly written as:
\begin{equation} \label{Markov_channel_loss}
\mathbb{P}\left[\gamma_k = 0  \right] 
=
\left\lbrace \begin{aligned}
& \tilde{g}(\mathbf{0},\Xi_{k})
 \triangleq \mathbb{P}\left[\xi_k = 0 \vert r = 1\right], &&a_k = 0, \\
& \tilde{g}(\mathbf{\Omega}_k,\Xi_{k})
\triangleq \frac{\mathbb{P}\left[\xi_k = 0 \vert r = r_k+1 \right] }{\mathbb{P}\left[\xi_{k-1} = 0 \vert r = r_k \right] }, &&a_k = 1.
\end{aligned}
\right.
\end{equation}
where $\mathbf{0} \triangleq [\underbrace{0,0,...,0}_{B}]$.

\underline{In the static channel scenario}, i.e., a special case of the Markov channel scenario, as the channel power gains are identical to each other, the packet error probability in \eqref{Markov_channel_loss} can be rewritten as a function of the current number of transmission attempts as
\begin{equation} \label{static_channel_loss}
\mathbb{P}\left[\gamma_k = 0  \right] 
=
\left\lbrace \begin{aligned}
& g(1), &&a_k = 0, \\
& g(r_k+1), &&a_k = 1.
\end{aligned}
\right.
\end{equation}

As the packet error probability of a retransmission is smaller than a new transmission under the same channel condition, we have the following inequalities for the Markov and static channel scenarios, respectively, as
\begin{align} \label{inequality_1}
\Lambda'_i \triangleq \tilde{g}(\mathbf{0},i) > \tilde{g}(\mathbf{\Omega}_k,i), \forall k,
\end{align}
and 
\begin{align} \label{inequality_2}
\Lambda'_0 \triangleq g(1) > g(r_k+1), \forall k,
\end{align}
where $\Lambda'_i$ is the packet error probability of a new transmission with the channel power gain $u_i$ in the Markov channel scenario, and $\Lambda'_0$ is the packet error probability of a new transmission in the static channel scenario.

For the Markov channel, the largest packet error rate of a retransmission with channel power gain $u_i$ is defined as
\begin{equation}\label{max_error_markov}
\Lambda_i \triangleq \max\limits_{\mathbf{\Omega} \in \mathbb{N}^B_0 \backslash 
	 \mathbf{0}} \tilde{g}(\mathbf{\Omega},i), i=1,2,...,B.
\end{equation}
For the static channel, the largest packet error rate of a retransmission is defined as
\begin{equation} \label{max_error_static}
\Lambda_0  \triangleq \max\limits_{r>1} g(r).
\end{equation}

\subsection{Problem Formulation}
The sensor's transmission control policy is defined as the sequence
$\left\lbrace a_1,a_2,...,a_k,\cdots \right\rbrace$, where $a_k$ is the control action in time slot $k$.
In what follows, we optimize the sensor's  policy such that the long-term estimation error is minimized,~i.e.,
\begin{equation} \label{problem}
\min_{ a_1,a_2,...,a_k,\cdots }\limsup_{K\to\infty}\frac{1}{K}\sum_{k=1}^{K} \mathbb{E}\left[\text{Tr}\left(\mathbf{P}_k\right)\right].
\end{equation}

It is possible that the long-term estimation error may never be bounded no matter how we choose the policy, if the channel quality is always bad or the dynamic process~\eqref{sys} changes rapidly.
Therefore, \emph{it is also important to investigate the condition in terms of the transmission reliability and the dynamic process parameters, under which the remote estimation system can be stabilized, i.e., the long-term estimation MSE can be bounded.}

To shed light on the stability condition and the optimal policy structure, we first consider the simplified case, i.e., the static channel scenario, in Sec.~\ref{sec:static channel}.
The insights are leveraged to investigate the general Markov channel scenario in Sec.~\ref{sec:Markov}.


\section{Optimal Policy: Static Channel} \label{sec:static channel}
In this section, we investigate the optimal transmission control policy in static channel.

\subsection{MDP Formulation} \label{sec:MDP}  
From \eqref{general_form}, \eqref{retransmission time} and \eqref{q_envolution_2}, the estimation MSE $\text{Tr}\left(\mathbf{P}_k\right)$ and states $r_k$ and $q_k$ only depend on the previous action and states, i.e., $a_{k-1}$, $r_{k-1}$ and $q_{k-1}$.
Therefore, the online decision problem \eqref{problem} can be formulated as a discrete time Markov decision process (MDP) as follows. 

1) The state space is defined as 
$\mathbb{S} \triangleq \{(r, q) : r \leq q,\ (r, q) \in \mathbb{N} \times \mathbb{N}\}$, where the number of transmission attempts, $r$, should be no larger than the estimation quality indicator (i.e., the AoI), $q$, from the definition.
The state of the MDP at time $k$ is 
$s_k \triangleq (r_k, q_k) \in \mathbb{S}$. 
%

2) The action space is defined as $\mathbb{A} \triangleq \{0,1\}$.
A policy is a mapping from states to actions, i.e., $\pi: \mathbb{S}\rightarrow \mathbb{A}$.
Recall that the action at time $k$, $a_k \triangleq \pi(s_k) \in \mathbb{A}$, indicates a new transmission $(a_k=0)$ or a retransmission $(a_k=1)$.

3) The state transition function $P(s'|s,a)$ characterizes the probability that the state transits from state $s$ at time $(k-1)$ to $s'$ at time $k$ with action $a$ at time $k-1$. As the transition is time-homogeneous, we can drop the time index $k$ here. Let $s=(r,q)$ and $s'=(r',q')$ denote the current and next state, respectively. Based on the packet error probability \eqref{static_channel_loss} and the state updating rules \eqref{retransmission time} and \eqref{q_envolution_2}, we have the state transition function as
%
\begin{equation} \label{static:transision_func}
P(s'\vert s, a)=
\left\lbrace
\begin{aligned}
& 1-g(1), &&\text{if }a=0,s'= (1,1) \\
& g(1), &&\text{if }a=0,s'= (1,q+1) \\
& 1-g(r+1), &&\text{if }a=1,s'= (r+1,r+1) \\
& g(r+1), &&\text{if }a=1,s'= (r+1,q+1) \\
& 0, &&\text{otherwise}.\\
\end{aligned}
\right.
\end{equation}


4) The one-stage (instantaneous) cost, i.e., the estimation MSE based on \eqref{general_form}, is a function of the current state of $q$:
\begin{equation} \label{one-stage cost}
c(s,a) \triangleq
\text{Tr}\left(f^{q}(\bar{\mathbf{P}}_0)\right),
\end{equation}
which is independent of the action.

Therefore, the problem \eqref{problem} is equivalent to solve the classical \emph{average cost optimization} problem of the MDP.
Assuming the existence of a stationary and deterministic optimal policy, we can effectively solve the MDP problem using standard methods such as the relative value iteration algorithm~\cite[Chapter~8]{puterman2014markov}.

\subsection{Optimal Policy: Condition of Existence}
Since the cost function grows exponentially with the state~$q$, it is possible that the long-term average cost with a HARQ-based transmission control policy, $\pi$, in the state space $\mathbb{S}$ cannot be bounded, i.e., the remote estimation system is unstable. We give the following sufficient condition of the existence of an optimal policy that has a bounded long-term estimation MSE.



 \begin{theorem} \label{theorem:existence}
 	\normalfont
    	For the static channel, there exists a stationary and deterministic optimal policy $\pi^*$ of problem \eqref{problem}, if the following condition holds:
    	\begin{equation} \label{stability_condition}
    	\Lambda_0 \rho^2(\mathbf{A}) <1,
    	\end{equation}
    	where $\Lambda_0$ is the largest packet error probability of a retransmission defined in \eqref{max_error_static}.
 \end{theorem}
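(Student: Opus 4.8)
The plan is to recast Theorem~\ref{theorem:existence} as the existence of a solution to the average-cost optimality equation (ACOE) for the MDP of Section~\ref{sec:MDP}, and then read off the optimal policy as the minimizing action. I would use the vanishing-discount approach \cite[Chapter~8]{puterman2014markov}: for each discount factor $\beta\in(0,1)$ let $V_\beta$ be the $\beta$-discounted optimal value function, fix the reference state $s_0=(1,1)$, and study the differential value $h_\beta(s)\triangleq V_\beta(s)-V_\beta(s_0)$. If the family $\{h_\beta\}$ can be sandwiched between two $\beta$-independent functions, then a subsequential limit $h^*$ as $\beta\to1$ solves the ACOE $g^*+h^*(s)=\min_{a\in\{0,1\}}\{c(s,a)+\sum_{s'}P(s'\vert s,a)h^*(s')\}$, and the minimizer defines a stationary deterministic optimal policy $\pi^*$; this is exactly the set of sufficient conditions (in the spirit of Sennott) that I would verify.

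The crux is to exhibit one stationary policy whose induced Markov chain is positive recurrent with finite average cost, since this both guarantees $V_\beta(s)<\infty$ and supplies the uniform upper bound on $h_\beta$. I would take the \emph{always-retransmit} policy $\pi_R$ (choose $a=1$ after every failure), under which the system evolves in renewal cycles: a cycle begins when a fresh measurement is sent and ends when it is finally decoded, after which the age resets. Writing $G(l)\triangleq\prod_{i=1}^{l}g(i)$ for the probability of $l$ consecutive failures, the number of attempts $N$ per cycle satisfies $\mathbb{P}[N>l]=G(l)\le g(1)\,\Lambda_0^{\,l-1}$ by the definition~\eqref{max_error_static} of $\Lambda_0$. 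During a cycle the age $q$ sweeps the values $1,2,\dots,N$, so by the monotonicity and geometric growth of $\text{Tr}(f^{q}(\bar{\mathbf{P}}_0))$ of order $(\rho^2(\mathbf{A}))^{q}$ (Lemma~3.1 of \cite{shi2012scheduling}) the expected accumulated cost per cycle is bounded by a constant multiple of $\sum_{l\ge1}(\rho^2(\mathbf{A})\,\Lambda_0)^{l}$. This geometric series converges \emph{precisely} when $\Lambda_0\rho^2(\mathbf{A})<1$; since $1\le\mathbb{E}[N]<\infty$, the renewal--reward theorem then gives a finite average cost for $\pi_R$, which upper-bounds the optimal average cost.

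With $\pi_R$ in hand the remaining conditions are routine but must be checked. First, $V_\beta(s)\le V_\beta^{\pi_R}(s)<\infty$ for every $s$, and the same cycle estimate bounds the discounted cost of $\pi_R$ started from an arbitrary state $(r,q)$ by a $\beta$-independent function $M(s)$ of order $(\rho^2(\mathbf{A}))^{q}$; the transition structure~\eqref{static:transision_func} then makes the one-step mean $\sum_{s'}P(s'\vert s,a)M(s')$ finite, yielding the upper sandwich $h_\beta(s)\le M(s)$. Second, since the one-stage cost~\eqref{one-stage cost} is nonnegative and, by a coupling argument, $(1,1)$---the unique state of minimal age---attains the minimal discounted value, we have $V_\beta(s)\ge V_\beta(s_0)$ and hence the lower sandwich $h_\beta(s)\ge0$. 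These bounds are uniform in $\beta$, so letting $\beta\to1$ along a suitable subsequence produces an ACOE solution and the stationary deterministic optimal policy $\pi^*$ asserted by the theorem.

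I expect the main obstacle to be the uniform upper bound on $h_\beta$, i.e.\ constructing the bounding function $M$ with finite drift. Because the one-stage cost $\text{Tr}(f^{q}(\bar{\mathbf{P}}_0))$ explodes geometrically in the age $q$, controlling the cost incurred while driving a far state $(r,q)$ back toward $s_0$ demands that the tail of the age distribution decay faster than the cost grows. This is exactly the content of condition~\eqref{stability_condition}: it forces the geometric failure-tail rate $\Lambda_0$ to beat the geometric cost-growth rate $\rho^2(\mathbf{A})$, so that $M(s)$ and its one-step expectation are finite. Verifying this drift estimate carefully---rather than merely the finiteness of the average cost of $\pi_R$---is where the technical effort concentrates; the extraction of $\pi^*$ from the ACOE is then immediate.
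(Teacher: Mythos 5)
Your plan is fundamentally the same as the paper's: both arguments run through Sennott's theory for average-cost MDPs with unbounded one-stage costs, both hinge on the retransmit-until-success policy (your $\pi_R$ is exactly the policy $\psi$ the paper constructs in \eqref{standard}), and both close with the same estimate --- the expected cost accumulated while driving an arbitrary state back to the reference state is controlled by a geometric series of ratio $\Lambda_0\rho^2(\mathbf{A})$, obtained from $g(i)\le\Lambda_0$ for $i\ge 2$ (definition \eqref{max_error_static}) and $\text{Tr}\left(f^{q}(\bar{\mathbf{P}}_0)\right)=O\!\left(\rho^{2q}(\mathbf{A})\right)$. The difference is packaging: the paper verifies conditions (CAV*1)/(CAV*2) of \cite[Corollary~7.5.10]{sennott2009stochastic} by bounding the expected first-passage cost and time of $\psi$, whereas you verify the vanishing-discount (SEN-type) conditions by hand. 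Note that your bounding function $M(s)$ is precisely the expected cost of returning to $s_0$ under $\pi_R$, i.e.\ the first-passage cost $d(\cdot)$ of Appendix~A, so the ``main obstacle'' you identify at the end is literally the paper's central computation; your renewal--reward step is an additional, cleaner way of seeing that $\pi_R$ has finite average cost.

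Two steps as written need repair, though neither invalidates the approach. (i) \emph{Cycle accounting}: after a message is decoded on its $N$-th attempt the age resets to $N$, not to $1$ --- see \eqref{static:transision_func}, where a successful retransmission leads to $(r+1,r+1)$. Hence in cycle $i$ the age sweeps $N_{i-1}, N_{i-1}+1,\dots,N_{i-1}+N_i-1$, the cycle cost is of order $\rho^{2(N_{i-1}+N_i)}(\mathbf{A})$, and consecutive cycle costs are $1$-dependent rather than i.i.d. The conclusion survives (take the expectation of a product of two independent factors, each finite exactly under $\Lambda_0\rho^2(\mathbf{A})<1$), but your stated bound of ``a constant multiple of $\sum_{l\ge1}(\rho^2(\mathbf{A})\Lambda_0)^l$'' is not what the chain produces. (ii) \emph{The lower sandwich}: the claim $V_\beta(s)\ge V_\beta(1,1)$ is not routine ``by a coupling argument,'' because the success probabilities of the two coupled chains differ with the action taken and with $r$, so the natural same-randomness coupling does not preserve the ordering. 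The standard proof is an induction on value iteration carried out jointly with monotonicity of $V_\beta(r,q)$ in $q$ and of the post-success values $V_\beta(r,r)$ in $r$; and this step cannot be waved away, since Sennott's lower-bound condition genuinely requires the reference state's discounted value to be within a $\beta$-uniform constant of $\inf_s V_\beta(s)$. This bookkeeping is exactly what the paper's route through \cite[Corollary~7.5.10]{sennott2009stochastic} packages away --- once you have the first-passage bounds, citing that corollary (as the paper does) lets you skip proving the minimality of $(1,1)$ altogether.
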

\begin{proof}
	See Appendix A.
\end{proof}


\begin{remark}
	From Theorem~\ref{theorem:existence}, it is clear that the optimal policy exists if we have a good channel condition and a good HARQ scheme, which guarantees high retransmission reliability (i.e., a small $g(r)$ and hence a small $\Lambda_0$), or the dynamic process does not change quickly, which is easy to estimate (i.e., a small $\rho^2(\mathbf{A})$).	
\end{remark}

%

\subsection{Optimal Policy: The Structure}
We show that the optimal policy has a switching structure as follows.
	\begin{theorem} \label{theorem:switching}
 	\normalfont		
		The optimal policy $\pi^*$ of problem \eqref{problem} is a switching-type policy, i.e.,		
(i) if $\pi^{*}(r,q)=0$, then  $\pi^{*}(r+z,q)=0$;
(ii) if $\pi^{*}(r,q)=1$, then  $\pi^{*}(r,q+z)=1$, where $z$ is any positive integer.
	\end{theorem}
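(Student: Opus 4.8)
The plan is to work directly with the average-cost optimality equation of the MDP in Section~\ref{sec:MDP} and to read the switching structure off the sign of the advantage of retransmitting over sending a new measurement. Let $V$ denote the relative value function and $\theta$ the optimal gain; using the one-stage cost $c(q)=\text{Tr}(f^{q}(\bar{\mathbf{P}}_0))$ and the transition law \eqref{static:transision_func}, the two state-action values at $s=(r,q)$ are
\[
Q_0(q)=(1-g(1))V(1,1)+g(1)V(1,q+1),\qquad Q_1(r,q)=(1-g(r+1))V(r+1,r+1)+g(r+1)V(r+1,q+1),
\]
so that $V(r,q)+\theta=c(q)+\min\{Q_0(q),Q_1(r,q)\}$ and, breaking ties toward a new transmission, $\pi^{*}(r,q)=1$ exactly when $\Delta(r,q):=Q_1(r,q)-Q_0(q)<0$. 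Since $c(q)$ cancels and $Q_0$ is independent of $r$, statement (i) reduces to showing that $\Delta(\cdot,q)$ is non-decreasing in $r$, and statement (ii) reduces to showing that $\Delta(r,\cdot)$ is non-increasing in $q$. Establishing these two monotonicities is the goal.

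The core of the argument is to prove three structural properties of $V$: (P1) $V(r,q)$ is non-decreasing in $q$; (P2) $V(r,q)$ is non-decreasing in $r$; and (SUB) the age increment $V(r,q+1)-V(r,q)$ is non-increasing in $r$ (submodularity). Given these, statement (ii) is immediate, since
\[
\Delta(r,q+1)-\Delta(r,q)=g(r+1)[V(r+1,q+2)-V(r+1,q+1)]-g(1)[V(1,q+2)-V(1,q+1)]\le 0,
\]
where $g(r+1)<g(1)$ from \eqref{inequality_2}, (P1) makes both brackets non-negative, and (SUB) orders them. Statement (i) is the harder direction: it is equivalent to $Q_1(\cdot,q)$ being non-decreasing in $r$, which (through the optimality equation, where $Q_0$ and $c(q)$ are $r$-free) is essentially the same assertion as (P2) itself.

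To obtain (P1), (P2) and (SUB) for the optimal $V$, I would run the relative value iteration $V_{n+1}=TV_n$ and show by induction that $T$ maps the cone of functions satisfying (P1)--(SUB) into itself; the additive constant of the relative update is irrelevant since all three properties concern only differences of $V$. The base case $V_0\equiv 0$ is immediate, and in the inductive step the monotonicity of $c(q)$ in $q$ (Lemma~3.1 of \cite{shi2012scheduling}) and the monotone decay of the HARQ error probability $g(\cdot)$ supply the required signs; preservation of (P1) under the minimization is routine, and preservation of (SUB) follows from the standard submodularity-preservation lemmas for the pointwise minimum. Passing to the limit---legitimate once Theorem~\ref{theorem:existence} secures a stationary optimal policy with finite average cost in the regime $\Lambda_0\rho^2(\mathbf{A})<1$---transfers the properties to $V$.

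The main obstacle I anticipate is the inductive preservation of (P2), equivalently the monotonicity of $Q_1$ in $r$. Incrementing $r$ raises the indices of both successors $(r+1,r+1)$ and $(r+1,q+1)$, which increases their values and pushes $Q_1$ up, while simultaneously lowering $g(r+1)$, which shifts weight onto the smaller successor value and pushes $Q_1$ down. Concretely,
\[
Q_1(r+1,q)-Q_1(r,q)=(1-g(r+1))[V(r+2,r+2)-V(r+1,r+1)]+g(r+2)[V(r+2,q+1)-V(r+1,q+1)]+(g(r+1)-g(r+2))[V(r+2,r+2)-V(r+1,q+1)],
\]
where the first two bracketed terms are non-negative under (P1)--(P2) but the cross term can be negative when $q>r$. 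A term-by-term sign check therefore fails, and closing the gap will require carrying a strengthened joint induction hypothesis---for instance tracking the success-failure gap $V(r+1,q+1)-V(r+1,r+1)$ and bounding it---so as to balance the growth of $V$ against the decay rate of $g(\cdot)$, invoking the stability condition $\Lambda_0\rho^2(\mathbf{A})<1$ where needed. I expect this quantitative balancing to be the delicate part, with (P1), (SUB) and statement (ii) comparatively routine.
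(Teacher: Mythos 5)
Your reduction of part (ii) is sound: the identity
\begin{equation*}
\Delta(r,q+1)-\Delta(r,q)=g(r+1)\left[V(r+1,q+2)-V(r+1,q+1)\right]-g(1)\left[V(1,q+2)-V(1,q+1)\right]
\end{equation*}
is algebraically correct, and together with $g(r+1)<g(1)$ from \eqref{inequality_2}, (P1) and (SUB) it gives the monotonicity in $q$. But the proposal does not prove part (i), and you acknowledge this yourself. Your decomposition of $Q_1(r+1,q)-Q_1(r,q)$ is also correct, and the cross term $(g(r+1)-g(r+2))\left[V(r+2,r+2)-V(r+1,q+1)\right]$ can genuinely be negative when $q>r+1$ (the cost, hence $V$, grows exponentially in the age coordinate, so $V(r+1,q+1)$ can dwarf $V(r+2,r+2)$), so monotonicity of $Q_1(\cdot,q)$ in $r$ does not follow from (P1)--(P2), and the ``strengthened joint induction hypothesis'' that would balance the growth of $V$ against the decay of $g(\cdot)$ is never supplied. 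Part (i) is the substantive half of the switching structure, so what you have is a plan, not a proof. Two further steps you label as routine are not: preservation of (SUB) under the Bellman operator needs a supermodularity property of the transition kernel that is never checked (the success successor $(r+1,r+1)$ moves in \emph{both} coordinates as $r$ grows, which is exactly the feature that breaks term-by-term arguments), and transferring structural properties from relative value iteration to the average-cost problem on a countable state space with unbounded one-stage cost requires a vanishing-discount or Sennott-type argument~\cite{sennott2009stochastic}, not a bare limit.

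The paper closes exactly the gap you ran into by never attempting to prove that $V$ or $Q_1$ is monotone in $r$. It instead invokes the monotone-optimal-policy result (Theorem 8.11.3 of \cite{puterman2014markov}), whose hypotheses are placed on the \emph{primitives} of the MDP rather than on the value function: the one-stage cost must be nondecreasing and superadditive on $\mathbb{S}\times\mathbb{A}$ --- trivial here because $c(s,a)=\text{Tr}\left(f^{q}(\bar{\mathbf{P}}_0)\right)$ is action-independent --- and the tail transition probabilities $\tilde{P}(s'|s,a)=\sum_{i=s'}^{\infty}P(i|s,a)$ must be nondecreasing and superadditive. Under the transition law \eqref{static:transision_func} and the ordering by $r$ (for fixed $q$), these tail sums are $\{0,1\}$-valued indicator functions, so the required monotonicity and superadditivity are finite, sign-free verifications that never touch $V$ and never require comparing the reset value $V(r+1,r+1)$ against the aged value $V(r+1,q+1)$. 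If you want to salvage your route, you would need to prove a joint structural property playing the role of your missing hypothesis; the lesson of the paper's proof is that conditions on costs and kernels let you sidestep that quantitative balancing altogether.
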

	\begin{proof}
		See Appendix B.
	\end{proof}
In other words, for the optimal policy, the two-dimensional state space $\mathbb{S}$ is divided into two regions by a curve, and the decision actions of the states within each region are the same, as illustrated in Fig.~\ref{fig:region_action}.
\begin{figure*}[t]
	\minipage{0.5\textwidth}
	\centering
	\includegraphics[scale=1]{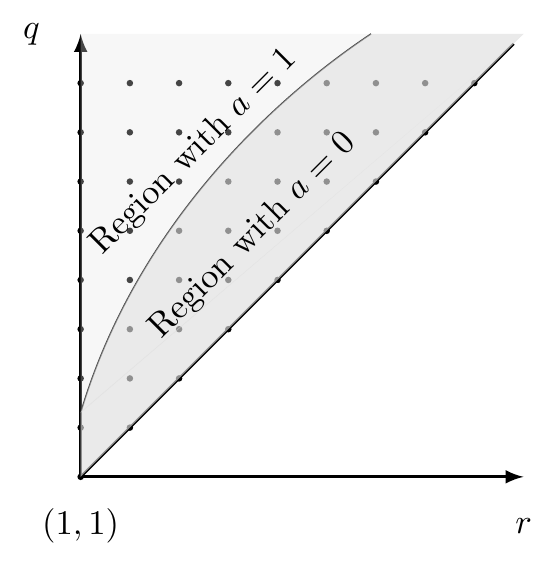}
	\vspace{-0.3cm}
	\caption{The switching structure of the optimal policy in the state space $\mathbb{S}$.}
	\label{fig:region_action}
	\endminipage
	\hspace{0.5cm}
	\minipage{0.45\textwidth}	
	\centering
	\includegraphics[scale=1]{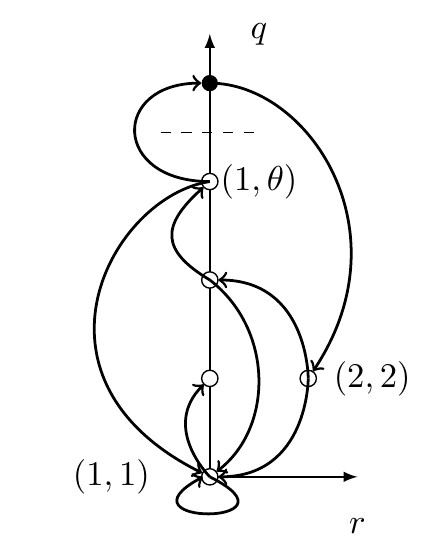}
	\vspace{0.0cm}
	\caption{Illustration of the state space and state transitions in the high SNR scenario, where $\theta = 4$.}
	\label{fig:special_static}
	\endminipage
	\vspace*{-0.7cm}
\end{figure*}

\begin{remark}
	Note that the switching structure can help save storage space for on-line implementation, since the smart sensor only needs to store switching-boundary states rather than the actions on the entire state space. At each time, the sensor simply needs to compare the current state with the boundary states to give the optimal decision.
\end{remark}
     
\subsection{Optimal Policy: A Special Case}
We consider the high SNR scenario, where retransmissions can have ultra-low packet error probabilities. Therefore, we assume that a retransmission is always successful in the high SNR scenario, and the optimal policy always exists from Theorem~\ref{theorem:existence}.

%

Due to the successful retransmissions, it can be noted from \eqref{static:transision_func} that the states in $\mathbb{S}$ with $r>1$ and $r\neq q$ are transient states.
Also, since a successful retransmission must be followed by a new transmission, the states in $\mathbb{S}$ with $r = q$ and $r>1$ are transient states, and the state $(2,2)$ has the action of new transmission. 
Furthermore, due to the switching structure of the optimal policy in Theorem \ref{theorem:switching}, we set a policy-switching threshold $\theta \in \mathbb{N}$ for the states with $r=1$, where the states $q>\theta, r=1$ choose the action of retransmission, while the states with $q\leq\theta, r=1$ choose the action of new transmission.
Then, it is easy to see that the states with $r=1$ and $q>\theta+1$ are transient states.
Finally, the countably infinite state space $\mathbb{S}$ is reduced to a finite state space $\mathbb{S}' = \{(2,2), (1,q), \forall q \in \{1,...,\theta+1\}$ as illustrated in Fig. \ref{fig:special_static}. 
Only the state $(1,\theta+1)$ has the action $a=1$, and the other states have the action $a=0$.

Therefore, $\theta$ is the key design parameter to be optimally designed. The policy optimization problem in the state space $\mathbb{S}$ is transformed to the one-dimensional problem. 
By calculating the stationary distribution of the states in $\mathbb{S}'$ with a given $\theta$, 
the average cost $\rho$ can be obtained, and we have the following result.
\begin{proposition} \label{proposition_special_case}
	\normalfont
In the high SNR scenario, the minimum long-term average MSE of the static channel is given as 
\begin{equation} \label{average_cost_rho}
\zeta^\star = \begin{cases}
\frac{(1-\Lambda'_0)\text{Tr}f(\bar{\mathbf{P}}_0) +(2\Lambda'_0-(\Lambda'_0)^2)\text{Tr}f^2(\bar{\mathbf{P}}_0) +(\Lambda'_0)^2 \text{Tr}f^3(\bar{\mathbf{P}}_0)}{1+\Lambda'_0} & \text{if } \theta^\star = 1\\
\frac{(1-\Lambda'_0)\bigg(\sum_{i=1}^{\theta^\star+1}\text{Tr}\left(f^{i}(\bar{\mathbf{P}}_0)\right) (\Lambda'_0)^{i-1}-\text{Tr}\left(f(\bar{\mathbf{P}}_0)\right) (\Lambda'_0)^{\theta^\star-1}\bigg)}{1-(\Lambda'_0)^{\theta^\star-1}+(\Lambda'_0)^{\theta^\star}+(\Lambda'_0)^{\theta^\star+1}} & \text{if } \theta^\star > 1\\
\end{cases}
\end{equation}
where $\theta^\star$ is the optimal policy-switching threshold.
\end{proposition}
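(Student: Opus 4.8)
The plan is to treat the reduced problem as a parametric family of finite-state Markov chains indexed by the switching threshold $\theta$, compute the stationary distribution of each chain in closed form, and evaluate the long-term average MSE as the stationary expectation of the one-stage cost \eqref{one-stage cost}. Since that cost depends only on the second coordinate $q$, for a fixed $\theta$ the average MSE is $\zeta(\theta)=\sum_{(r,q)\in\mathbb{S}'}\mu(r,q)\,\text{Tr}\left(f^{q}(\bar{\mathbf{P}}_0)\right)$, where $\mu$ is the stationary distribution of the chain on $\mathbb{S}'$. The stated $\zeta^\star$ is then simply $\zeta(\theta^\star)$ with $\theta^\star=\arg\min_{\theta\in\mathbb{N}}\zeta(\theta)$, so the proposition reduces to deriving $\zeta(\theta)$ in closed form.

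First I would specialize the transition function \eqref{static:transision_func} to the high-SNR regime, where every retransmission succeeds, i.e. $g(r+1)=0$ for $r\ge 1$, while a new transmission still fails with probability $g(1)=\Lambda'_0$. This yields the transitions on $\mathbb{S}'$: from $(1,q)$ with $q\le\theta$, a jump to $(1,1)$ with probability $1-\Lambda'_0$ and to $(1,q+1)$ with probability $\Lambda'_0$; from the boundary state $(1,\theta+1)$, a deterministic jump to $(2,2)$; and from $(2,2)$ a jump to $(1,1)$ with probability $1-\Lambda'_0$ and to $(1,3)$ with probability $\Lambda'_0$. The transition $(2,2)\to(1,3)$, i.e. the re-entry of the chain at $q=3$ after a failed new transmission out of the post-retransmission state $(2,2)$, is the feature that couples the two branches of the chain and makes the bookkeeping nontrivial.

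Next I would solve the global balance equations. Because a failed new transmission increments $q$ by one while resetting $r=1$, the probabilities $\mu(1,q)$ for $q\ge 3$ form a geometric sequence with ratio $\Lambda'_0$ seeded at $\mu(1,3)$, together with $\mu(1,2)=\Lambda'_0\mu(1,1)$ and $\mu(2,2)=\mu(1,\theta+1)$. Substituting $\mu(2,2)$ into the balance equation $\mu(1,3)=(\Lambda'_0)^2\mu(1,1)+\Lambda'_0\mu(2,2)$ gives a single linear relation expressing $\mu(1,3)$ through $\mu(1,1)$, and the normalization $\sum_{(r,q)\in\mathbb{S}'}\mu(r,q)=1$ fixes $\mu(1,1)$; combining the two normalization fractions over the common denominator $(1-\Lambda'_0)(1-(\Lambda'_0)^{\theta-1})$ produces the denominator polynomial in $\Lambda'_0$ with terms up to order $\theta+1$. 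Assembling $\zeta(\theta)=\sum\mu\,\text{Tr}(f^{q})$ and simplifying — where the two contributions to the coefficient of $\text{Tr}(f^{2})$, coming from $\mu(1,2)$ and from $\mu(2,2)$, combine cleanly — delivers the closed form for the case $\theta>1$.

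The case split is the main obstacle. For $\theta>1$ the re-entry state $(1,3)$ satisfies $q=3\le\theta+1$, so it lies on or below the boundary and feeds back into the geometric chain above (and the same algebra covers $\theta=2$, where $(1,3)$ coincides with the boundary $(1,\theta+1)$). For $\theta=1$, by contrast, the re-entry state $(1,3)$ has $q=3>\theta+1=2$, so it lies strictly above the threshold and itself triggers a retransmission; the recurrent class is then exactly $\{(1,1),(1,2),(2,2),(1,3)\}$ with the short loop $(2,2)\leftrightarrow(1,3)$ replacing the long geometric tail, and solving its four balance equations yields the separate expression with denominator $1+\Lambda'_0$. I would keep careful track of which states are recurrent for each regime of $\theta$ — this is the step most prone to error — and would sanity-check the results by verifying nonnegativity and that the stationary probabilities sum to one on a small numerical instance. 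Once $\zeta(\theta)$ is obtained for both regimes, $\zeta^\star=\min_{\theta}\zeta(\theta)=\zeta(\theta^\star)$ is immediate from the definition of $\theta^\star$.
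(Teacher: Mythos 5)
Your proposal takes exactly the route the paper itself uses (the paper only sketches it above the proposition): reduce to the finite chain on $\mathbb{S}'$, solve the balance equations for the stationary distribution with threshold $\theta$ fixed, evaluate the average cost, and minimize over $\theta$; your identification of the $(2,2)\to(1,3)$ re-entry as the coupling feature, and of the separate recurrent class $\{(1,1),(1,2),(2,2),(1,3)\}$ with the loop $(2,2)\leftrightarrow(1,3)$ when $\theta=1$, is correct and is precisely what forces the two-case formula. One caution for the execution: carrying your balance/normalization algebra through for $\theta>1$ gives $\mu(1,1)=\frac{(1-\Lambda'_0)\left(1-(\Lambda'_0)^{\theta-1}\right)}{1-(\Lambda'_0)^{\theta-1}+(\Lambda'_0)^{\theta}-(\Lambda'_0)^{\theta+1}}$ and $\mu(2,2)=\mu(1,\theta+1)=\frac{(1-\Lambda'_0)(\Lambda'_0)^{\theta}}{1-(\Lambda'_0)^{\theta-1}+(\Lambda'_0)^{\theta}-(\Lambda'_0)^{\theta+1}}$, i.e., the last denominator term carries a \emph{minus} sign rather than the plus sign printed in \eqref{average_cost_rho}, while the numerator simplifies to exactly the paper's expression. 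A direct numerical check confirms your chain and the minus sign: for $\theta=2$, $\Lambda'_0=1/2$ the stationary distribution on $\left((1,1),(1,2),(1,3),(2,2)\right)$ is $(0.4,0.2,0.2,0.2)$, whose average cost agrees with the minus-sign formula but not with the printed one; so your derivation is sound, and completing it will surface what appears to be a sign typo in the proposition (the $\theta^\star=1$ branch matches exactly).
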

In Proposition~\ref{proposition_special_case}, the optimal $\theta$ can be numerically obtained by linear search methods, yielding the minimum estimation MSE.

\subsection{Suboptimal Policy}
The optimal policy of the MDP problem in Sec.~\ref{sec:MDP} does not have a closed-form expression for low-complexity computation.
Besides, since the MDP problem has infinitely many states,
it has to be approximated by a truncated MDP problem with finite states for numerical evaluation and solved offline.
Therefore, we propose a easy-to-compute suboptimal policy, which is the myopic policy that makes decision simply to maximize the expected instantaneous cost.

Based on \eqref{static:transision_func} and \eqref{one-stage cost}, the expected next step cost $c'((r,q),a)$ given the current state $(r,q)$ and action $a$ can be derived~as
\begin{equation}
\begin{aligned}
&c'((r,q),a) \\
&=\begin{cases}
g(1)\text{Tr}\left(f^{q+1}(\bar{\mathbf{P}}_0)\right) + (1-g(1))\text{Tr}\left(f(\bar{\mathbf{P}}_0)\right), &\mbox{ if } a = 0,\\
g(r+1)\text{Tr}\left(f^{q+1}(\bar{\mathbf{P}}_0)\right) + (1-g(r+1))\text{Tr}\left(f^{r+1}(\bar{\mathbf{P}}_0)\right), &\mbox{ if } a = 1.
\end{cases}
\end{aligned}
\end{equation}
Then, we have
\begin{equation}
\begin{aligned}
&c'((r,q),1)-c'((r,q),0) \\
&= (g(r+1)-g(1))\text{Tr}\left(f^{q+1}(\bar{\mathbf{P}}_0)\right)
+(1-g(r+1))\text{Tr}\left(f^{r+1}(\bar{\mathbf{P}}_0)\right)-(1-g(1))\text{Tr}\left(f(\bar{\mathbf{P}}_0)\right).
\end{aligned}
\end{equation}
Using the inequality~\eqref{inequality_2}, $c'((r,q),1)-c'((r,q),0) \geq 0$ if and only if $(r,q)$ satisfies
\begin{equation} \label{condition}
\begin{aligned}
\text{Tr}\left(f^{q+1}(\bar{\mathbf{P}}_0)\right) 
\leq \frac{(1-g(r+1))\text{Tr}\left(f^{r+1}(\bar{\mathbf{P}}_0)\right) -(1-g(1))\text{Tr}\left(f(\bar{\mathbf{P}}_0)\right)}{g(1)-g(r+1)}.
\end{aligned}
\end{equation}
Thus, we have the following result.      
\begin{proposition} \label{prop:sub}
	\normalfont      	
	A suboptimal policy of problem~\eqref{problem} is
	\begin{equation} \label{sub_policy}
	a = \begin{cases}
	0 &\mbox{if the condition \eqref{condition} is satisfied,}\\
	1 &\mbox{otherwise.}\\
	\end{cases}
	\end{equation}
\end{proposition}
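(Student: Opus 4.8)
The plan is to realize the suboptimal policy as the one-step-ahead greedy (myopic) rule and then read off its decision boundary directly from the expected-cost comparison already assembled in the text preceding the statement. The first point I would stress is a subtlety hidden in the word ``myopic'': the one-stage cost $c(s,a)=\text{Tr}(f^{q}(\bar{\mathbf{P}}_0))$ in \eqref{one-stage cost} is \emph{independent of the action}, so a greedy rule based on the current cost alone cannot discriminate $a=0$ from $a=1$. The meaningful myopic criterion is therefore the expected cost incurred at the \emph{next} slot,
\[
c'((r,q),a)=\sum_{s'}P(s'\vert (r,q),a)\,\text{Tr}\!\left(f^{q'}(\bar{\mathbf{P}}_0)\right),
\]
which does depend on $a$ through the transition function \eqref{static:transision_func}. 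The suboptimal policy is defined to choose $a=\arg\min_{a\in\{0,1\}}c'((r,q),a)$.

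First I would substitute the two branches of \eqref{static:transision_func} together with \eqref{one-stage cost} to recover the closed forms of $c'((r,q),0)$ and $c'((r,q),1)$ given above. Each is a convex combination, weighted by the failure/success probabilities $g(\cdot)$ and $1-g(\cdot)$, of the ``aged'' cost $\text{Tr}(f^{q+1}(\bar{\mathbf{P}}_0))$ (incurred on a failed slot, since then $q'=q+1$) and a reset cost, which is $\text{Tr}(f(\bar{\mathbf{P}}_0))$ for a new transmission and $\text{Tr}(f^{r+1}(\bar{\mathbf{P}}_0))$ for a retransmission. Next I would form the difference $c'((r,q),1)-c'((r,q),0)$, the signed decision statistic: the myopic rule selects $a=0$ precisely when this difference is nonnegative. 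Collecting terms, the coefficient multiplying $\text{Tr}(f^{q+1}(\bar{\mathbf{P}}_0))$ is $g(r+1)-g(1)$, while the remaining terms carry the two action-dependent reset costs.

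The only genuine step is solving $c'((r,q),1)-c'((r,q),0)\ge 0$ for a clean threshold on $q$. After moving the $\text{Tr}(f^{q+1}(\bar{\mathbf{P}}_0))$ term to the right-hand side it appears with coefficient $g(1)-g(r+1)$, and this is exactly where inequality \eqref{inequality_2} is needed: since $g(1)>g(r+1)$ for every $r$, the divisor is strictly positive, so dividing preserves the inequality direction and yields condition \eqref{condition} without any sign reversal. Hence $c'((r,q),1)\ge c'((r,q),0)$ (new transmission myopically no worse than retransmission) if and only if \eqref{condition} holds, which is precisely the rule \eqref{sub_policy}. The main obstacle here is conceptual rather than computational: recognizing that the correct myopic object is the expected next-step cost $c'$ rather than the action-independent instantaneous cost, and then tracking the sign of $g(1)-g(r+1)$ so the final inequality points the right way.
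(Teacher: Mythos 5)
Your proposal is correct and follows essentially the same route as the paper: compute the expected next-step cost $c'((r,q),a)$ from \eqref{static:transision_func} and \eqref{one-stage cost}, form the difference $c'((r,q),1)-c'((r,q),0)$, and use \eqref{inequality_2} to divide by the strictly positive $g(1)-g(r+1)$, obtaining \eqref{condition} as the exact threshold for the myopic choice $a=0$. Your observation that the myopic criterion must be the expected \emph{next}-slot cost (since the instantaneous cost is action-independent) is precisely the paper's construction as well.
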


It can be proved that the suboptimal policy in Proposition~\ref{prop:sub} is also a switching-type policy. 
\emph{Due to the simplicity of the suboptimal policy, which, unlike the optimal policy, does not need any iteration for policy calculation, it can be applied as an on-line decision algorithm. } In Sec.~\ref{sec:num}, we will show that the performance of the suboptimal policy is close to the optimal one for practical system parameters. The detailed computation-complexity analysis will be given in Sec.~\ref{sec:num} as well.

\section{Optimal Policy: Markov Channel} \label{sec:Markov}
In this section, we investigate the sensor's optimal transmission control policy in the Markov channel.

\subsection{MDP Formulation}
We also formulate the problem as a MDP.

1) The state space is defined as 
$\mathbb{S} \triangleq \{(\mathbf{\Omega}, q, \Xi) : \mathbf{\Omega} \in \mathbb{N}^B_0\backslash\mathbf{0}, q \in \mathbb{N}, \Xi \in \{1,2,\cdots,B\}\}$.

2) The action space is defined as $\mathbb{A} \triangleq \{0,1\}$. 

3) 
Let $s=(\mathbf{\Omega},q,\Xi)$ and $s'=(\mathbf{\Omega}',q',\Xi')$ denote the current and next state, respectively. The transition probability can be written as
\begin{equation} \label{transition_function_markov}
P(s'|s,a)=
\begin{cases}
p_{\Xi,\Xi'}(1-\tilde{g}(\mathbf{0},\Xi)) & \mbox{if }a =  0, s'=(\mathbf{1}_\Xi,1,\Xi'), \\
p_{\Xi,\Xi'}\tilde{g}(\mathbf{0},\Xi) & \mbox{if } a =  0, s'=(\mathbf{1}_\Xi,q+1,\Xi'), \\
p_{\Xi,\Xi'}\left(1-\tilde{g}(\mathbf{\Omega},\Xi)\right) & \mbox{if } a =  1, s'=(\mathbf{\Omega} +\mathbf{1}_\Xi,\left\| \mathbf{\Omega} \right\|_1+1,\Xi'), \\
p_{\Xi,\Xi'}\tilde{g}(\mathbf{\Omega},\Xi) & \mbox{if } a =  1, s'=(\mathbf{\Omega} +\mathbf{1}_\Xi,q+1,\Xi'), \\
0 & \mbox{otherwise}.
\end{cases}
\end{equation}
where  $\left\| \mathbf{\Omega} \right\|_1 \triangleq \sum_{i=1}^{B}r_i$.

4) The one-stage cost is given in \eqref{one-stage cost}.

\subsection{Optimal Policy: Condition of Existence}
Inspired by the static channel scenario, we derive the following condition under which the long term average MSE can be bounded. 
\begin{theorem} \label{theorem:stability_markov}
	\normalfont
	For a Markov channel, there exists a stationary and deterministic optimal policy $\pi^*$ of problem \eqref{problem}, if the following condition holds:
	\begin{equation} \label{stability_condition_markov}
	\rho\left(\mathbf{\Pi} \mathbf{\Lambda}\right) \rho^2(\mathbf{A}) <1,
	\end{equation}
	where $\mathbf{\Pi}$ is defined in \eqref{P_matrx}, and
	\begin{equation}
	\mathbf{\Lambda} \triangleq \text{diag}\left\lbrace \Lambda_1,..., \Lambda_B \right\rbrace,
	\end{equation}
	and $\Lambda_i$ is the largest packet error probability of a retransmission when the channel power gain is $u_i$ defined in \eqref{max_error_markov}.
\end{theorem}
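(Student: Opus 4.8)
The plan is to follow the template of the static-channel proof (Theorem~\ref{theorem:existence}, Appendix~A) and reduce the claim to verifying the standard sufficient conditions for the existence of a stationary deterministic average-cost optimal policy on a countable state space with unbounded one-stage cost, as in \cite[Chapter~8]{puterman2014markov}. The pivotal ingredient of those conditions is the exhibition of a single stationary policy whose long-term average MSE \eqref{cost_function} is finite; once such a stabilizing policy is found and the one-stage cost is bounded below (here by $\text{Tr}\left(f(\bar{\mathbf{P}}_0)\right)>0$), the vanishing-discount argument yields a stationary deterministic optimal policy. I would therefore take as the test policy the \emph{always-retransmit} policy, i.e. $a_k=1$ whenever $\gamma_{k-1}=0$, and show that condition \eqref{stability_condition_markov} is exactly what makes its average cost finite.

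The core is a drift/matrix-recursion estimate. Write $\alpha \triangleq \rho^2(\mathbf{A})$ and recall that the one-stage cost $\text{Tr}\left(f^{q}(\bar{\mathbf{P}}_0)\right)$ grows like $\alpha^{q}$: for any $\epsilon>0$ there is a constant $C_\epsilon$ with $\text{Tr}\left(f^{q}(\bar{\mathbf{P}}_0)\right)\le C_\epsilon(\alpha+\epsilon)^q$. Under the always-retransmit policy the AoI is a regenerative process that is reset by each successful delivery and increments by one during every failure run, so the average cost is controlled by the $\alpha$-weighted tail of the failure-run length. I would track this run as a vector over channel states: let $\mathbf{v}_n\in\mathbb{R}^B$ collect the $\alpha$-weighted probability that a failure run has survived $n$ slots and currently sits in each channel state. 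Since during a retransmission in channel state $u_i$ the failure probability is $\tilde{g}(\mathbf{\Omega},i)\le\Lambda_i$ uniformly in the accumulated history $\mathbf{\Omega}$ by \eqref{max_error_markov}, and the channel moves from $i$ to $j$ with probability $p_{i,j}=[\mathbf{\Pi}]_{j,i}$, one obtains the componentwise bound
\begin{equation} \label{drift_recursion}
\mathbf{v}_{n+1}\le \alpha\,\mathbf{\Pi}\mathbf{\Lambda}\,\mathbf{v}_n,\qquad\text{hence}\qquad \mathbf{v}_n\le\left(\alpha\,\mathbf{\Pi}\mathbf{\Lambda}\right)^n\mathbf{v}_0 .
\end{equation}
Because $\mathbf{\Pi}\mathbf{\Lambda}$ is a nonnegative matrix, the Neumann series $\sum_{n}\left(\alpha\,\mathbf{\Pi}\mathbf{\Lambda}\right)^n$ converges if and only if $\rho\!\left(\alpha\,\mathbf{\Pi}\mathbf{\Lambda}\right)=\rho^2(\mathbf{A})\,\rho\!\left(\mathbf{\Pi}\mathbf{\Lambda}\right)<1$, which is precisely \eqref{stability_condition_markov}. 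Summing \eqref{drift_recursion} then bounds the expected $\alpha$-weighted cost accumulated per regeneration cycle, and hence the long-term average MSE, by a finite constant, the $\epsilon$-slack above being absorbed by the strictness of the inequality.

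Having produced a stabilizing stationary policy, the final step is to promote this to the existence of a stationary deterministic \emph{optimal} policy. I would augment the MDP state with the channel index as in the formulation \eqref{transition_function_markov}, verify that the induced chain is irreducible on the communicating class reachable from the reset state $(\mathbf{1}_\Xi,1,\cdot)$, and check the Sennott-type conditions (finiteness of the discounted value, a uniform bound on the relative discounted values by an integrable function, and the lower bound on the cost) exactly as in Appendix~A, now carried through with the channel-state bookkeeping.

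The main obstacle is the channel coupling in the drift step: unlike the static case, where the analysis collapses to the scalar inequality $\Lambda_0\rho^2(\mathbf{A})<1$, here the per-slot failure probability and the cost growth must be propagated jointly with the channel transition, and one must recognize that the correct stability threshold is the spectral radius $\rho(\mathbf{\Pi}\mathbf{\Lambda})$ of the coupled nonnegative matrix rather than any averaged or worst-case scalar. Establishing the uniform per-step bound $\tilde{g}(\mathbf{\Omega},i)\le\Lambda_i$ \emph{independent of the run length} is what makes the clean recursion \eqref{drift_recursion}, and therefore the Perron/Neumann-series argument, go through; the remaining MDP-existence bookkeeping is routine once the stabilizing policy is in hand.
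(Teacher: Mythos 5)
Your proposal follows essentially the same route as the paper's Appendix C: the paper likewise constructs the always-retransmit policy, replaces every retransmission error probability by its uniform upper bound $\Lambda_i$ from \eqref{max_error_markov} (its Lemma 1), and shows the relevant expected first-passage costs are finite because they are dominated by matrix series of the form $\sum_j \rho^{2j}(\mathbf{A})\left(\mathbf{\Pi}\mathbf{\Lambda}\right)^{j}$, which converge precisely under $\rho\left(\mathbf{\Pi}\mathbf{\Lambda}\right)\rho^2(\mathbf{A})<1$, before invoking the same Sennott-type existence conditions used in Appendix A. The only organizational difference is that the paper casts your Neumann-series/drift estimate as a coupled system of first-passage-cost equations over the reset states $(j,j,\Xi)$, together with a lemma collapsing the $B$ reference states $(1,1,\Xi)$ into one; this is the rigorous counterpart of your "cost per regeneration cycle" summation, which as written glosses over the fact that successive delivery cycles are not i.i.d.\ (the post-delivery AoI equals the run length and the channel state carries over), a coupling your deferral to the Appendix-A bookkeeping would ultimately have to resolve the same way.
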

\begin{proof}
	See Appendix C.
\end{proof}

\begin{remark}
It is interesting to see that when retransmissions have very high reliability, i.e., ${\Lambda_i} \rightarrow 0\ \forall i=1,\cdots,B$, the eigenvalues of the matrix $\mathbf{\Pi} \mathbf{\Lambda}$ approaches to zero, and thus the left-hand side of \eqref{stability_condition_markov} is much less than one and the remote estimation system can be stabilized. 
\end{remark}
The stability regions of a two-state Markov channel in terms of $\Lambda_1$ and $\Lambda_2$ with different $\rho^2(\mathbf{A})$ are illustrated in Fig.~\ref{fig:region}, where $\mathbf{\Pi}=\begin{bmatrix}
0.8 & 0.5\\
0.2  & 0.5
\end{bmatrix}$.
We see that a larger $\rho^2(\mathbf{A})$ results in a smaller stability region.

\begin{figure*}[t]
	\minipage{0.45\textwidth}
	\centering
\includegraphics[scale=0.6]{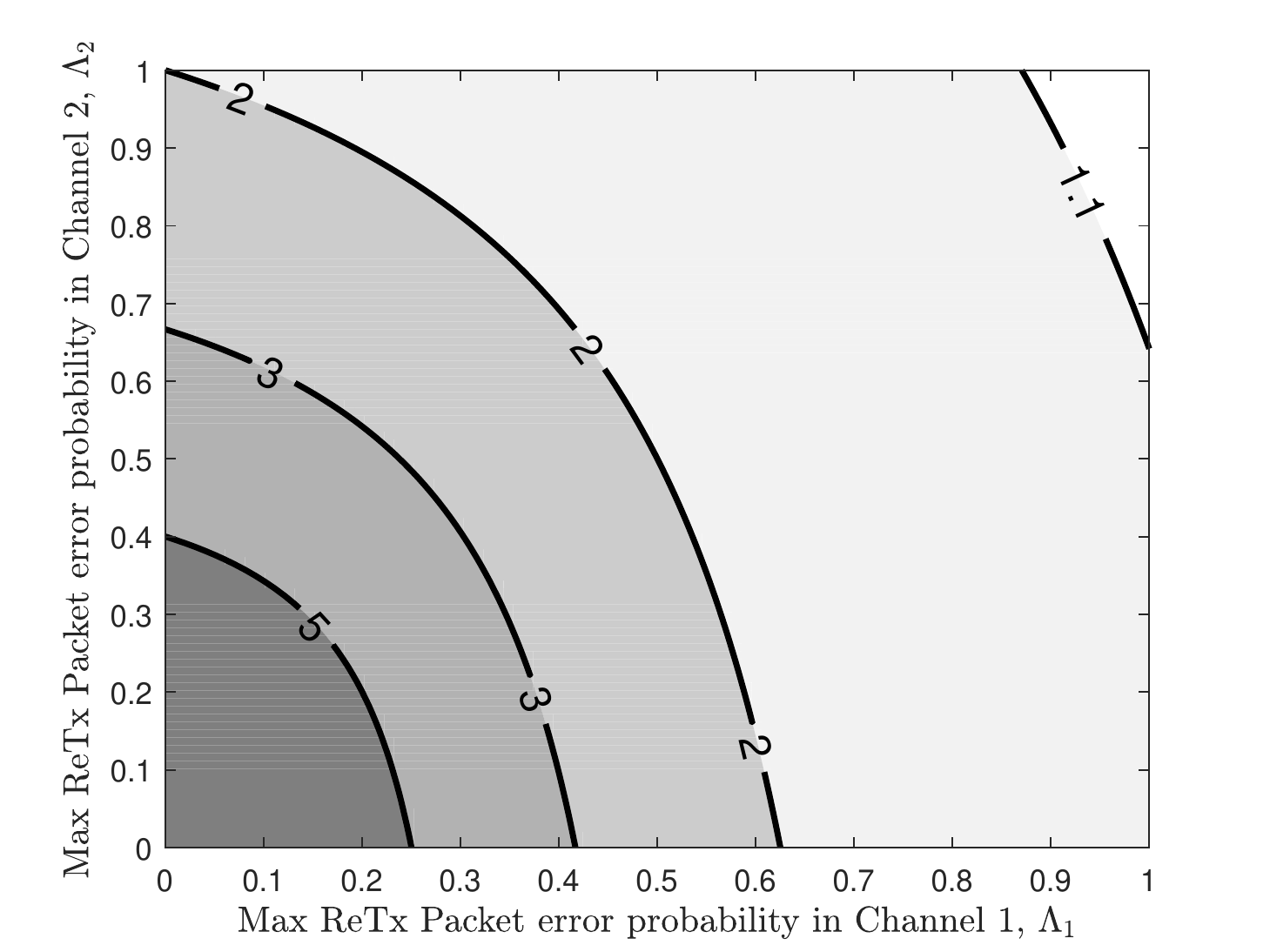}
\caption{The stability regions in terms of $\Lambda_1$ and $\Lambda_2$ with $\rho^2(\mathbf{A})=1.1,\ 2,\ 3$ and $5$, respectively.}
\label{fig:region}
	\endminipage
	\hspace{1cm}
	\minipage{0.45\textwidth}	
	\centering
	\includegraphics[scale=0.9]{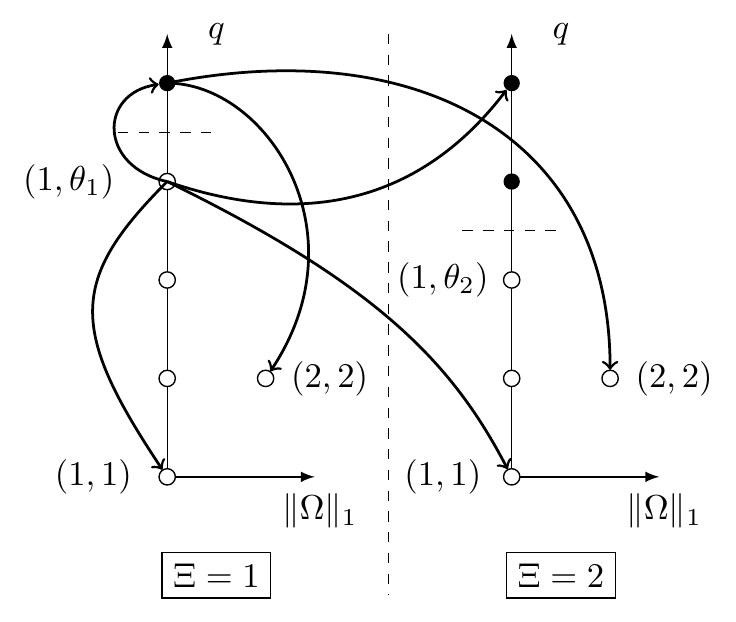}
\vspace{-0.3cm}
\caption{Illustration of the state space and state transitions  of the optimal transmission control policy in the high SNR scenario, where the channel is a $2$-state Markov channel, $\theta_1 = 4$ and $\theta_2=3$.}
\label{fig:special_markov}
	\endminipage
	\vspace*{-0.7cm}
\end{figure*}

%
%


\subsection{Optimal Policy: The Structure}
The optimal policy in the Markov channel also has a switching structure in the state space.
\begin{theorem} \label{theorem:structural_markov}
	\normalfont
	(i) if $\pi^{*}(\mathbf{\Omega},q,\Xi)=0$, then  $\pi^{*}(\mathbf{\Omega} + z \mathbf{1}_i,q,\Xi)=0, \forall 1 \leq i \leq B$;
	
	(ii) if $\pi^{*}(\mathbf{\Omega},q, \Xi)=1$, then  $\pi^{*}(\mathbf{\Omega},q+z,\Xi)=1$, where $z$ is any positive integer.
\end{theorem}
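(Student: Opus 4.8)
The plan is to mirror the static-channel argument of Appendix~B, working through the relative value iteration whose convergence is guaranteed by Theorem~\ref{theorem:stability_markov}. Writing $h$ for the relative value function and $\rho$ for the optimal average cost, the Bellman equation reads $h(s)+\rho=\min_{a\in\{0,1\}}Q(s,a)$ with $Q(s,a)\triangleq c(s,a)+\sum_{s'}P(s'\vert s,a)h(s')$, where $c$ is the one-stage cost \eqref{one-stage cost} and $P$ is the kernel \eqref{transition_function_markov}. Since $c(s,0)=c(s,1)=\text{Tr}(f^{q}(\bar{\mathbf{P}}_0))$, the optimal action is governed entirely by the sign of $\Delta(\mathbf{\Omega},q,\Xi)\triangleq Q(s,0)-Q(s,1)$, a retransmission being optimal iff $\Delta\ge 0$. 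Part (ii) then amounts to showing that $\Delta$ is non-decreasing in $q$, and part (i) to showing that $\Delta$ is non-increasing along each direction $\mathbf{1}_i$. Note from \eqref{transition_function_markov} that $Q(s,0)$ is independent of $\mathbf{\Omega}$, because a new transmission resets the counter to $\mathbf{1}_\Xi$; hence part (i) reduces to proving that $Q((\mathbf{\Omega},q,\Xi),1)$ is non-decreasing in $\mathbf{\Omega}$.

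First I would establish, by induction over the value-iteration sweeps $h_{n+1}(s)=\min_a\{c(s,a)+\sum_{s'}P(s'\vert s,a)h_n(s')\}-\rho$ and passing to the limit, two structural properties: (P1) $h(\mathbf{\Omega},q,\Xi)$ is non-decreasing in $q$ for fixed $\mathbf{\Omega},\Xi$; and (P2) the increment $h(\mathbf{\Omega},q+1,\Xi)-h(\mathbf{\Omega},q,\Xi)$ is non-increasing in $\mathbf{\Omega}$ along each $\mathbf{1}_i$. Property (P1) is immediate from the monotonicity of $\text{Tr}(f^{q}(\bar{\mathbf{P}}_0))$ noted after \eqref{general_form} together with the fact that the failure transition sends $q\mapsto q+1$ under both actions. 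Given (P1) and (P2), part (ii) follows by differencing: the only $q$-dependent terms of $\Delta$ are $\tilde{g}(\mathbf{0},\Xi)h(\mathbf{1}_\Xi,q+1,\Xi')$ and $\tilde{g}(\mathbf{\Omega},\Xi)h(\mathbf{\Omega}+\mathbf{1}_\Xi,q+1,\Xi')$, so each summand of $\Delta(\mathbf{\Omega},q+1,\Xi)-\Delta(\mathbf{\Omega},q,\Xi)$ equals $p_{\Xi,\Xi'}[\tilde{g}(\mathbf{0},\Xi)\,\delta_q h(\mathbf{1}_\Xi)-\tilde{g}(\mathbf{\Omega},\Xi)\,\delta_q h(\mathbf{\Omega}+\mathbf{1}_\Xi)]$ with $\delta_q h(\cdot)\triangleq h(\cdot,q+2,\Xi')-h(\cdot,q+1,\Xi')$; this is non-negative because (P1) makes both increments non-negative, (P2) gives $\delta_q h(\mathbf{\Omega}+\mathbf{1}_\Xi)\le\delta_q h(\mathbf{1}_\Xi)$, and \eqref{inequality_1} gives $\tilde{g}(\mathbf{0},\Xi)>\tilde{g}(\mathbf{\Omega},\Xi)$.

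For part (i) I would rewrite the retransmission value, using $r\triangleq\|\mathbf{\Omega}\|_1\le q$, as $Q((\mathbf{\Omega},q,\Xi),1)=\text{Tr}(f^{q}(\bar{\mathbf{P}}_0))+\sum_{\Xi'}p_{\Xi,\Xi'}[h(\mathbf{\Omega}+\mathbf{1}_\Xi,r+1,\Xi')+\tilde{g}(\mathbf{\Omega},\Xi)(h(\mathbf{\Omega}+\mathbf{1}_\Xi,q+1,\Xi')-h(\mathbf{\Omega}+\mathbf{1}_\Xi,r+1,\Xi'))]$, where the inner difference is non-negative by (P1) since $q\ge r$. The increment $\mathbf{\Omega}\mapsto\mathbf{\Omega}+\mathbf{1}_i$ raises the post-success age from $r+1$ to $r+2$ and advances the successor counter, both of which increase $h$, while it lowers the retransmission error $\tilde{g}(\mathbf{\Omega},\Xi)$. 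I therefore need a third preserved property, namely that this age-driven growth dominates so that $Q((\mathbf{\Omega},q,\Xi),1)$ is non-decreasing in $\mathbf{\Omega}$, again carried through the Bellman operator by induction and leaning on the exponential monotonicity of $\text{Tr}(f^{n}(\bar{\mathbf{P}}_0))$ in $n$ and on the bound $\tilde{g}(\mathbf{\Omega},\Xi)\le\Lambda_i$ from \eqref{max_error_markov}. Once $Q(\cdot,1)$ is monotone in $\mathbf{\Omega}$ we obtain $\Delta(\mathbf{\Omega}+z\mathbf{1}_i,q,\Xi)\le\Delta(\mathbf{\Omega},q,\Xi)$, so $\Delta\le 0$ propagates upward in $\mathbf{\Omega}$ and part (i) follows by induction on $z$.

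The hard part will be verifying that this last $\mathbf{\Omega}$-monotonicity is preserved by the Bellman operator, because increasing $\mathbf{\Omega}$ has the two competing effects noted above---it makes a retransmission more reliable (favouring $a=1$) yet simultaneously pushes the post-success age $\|\mathbf{\Omega}\|_1+1$ upward (disfavouring it)---and, unlike the scalar counter $r$ of the static case, the multidimensional $\mathbf{\Omega}$ interacts with the channel averaging $\sum_{\Xi'}p_{\Xi,\Xi'}(\cdot)$ in \eqref{transition_function_markov}. Resolving it requires showing that the exponential age penalty always outweighs the bounded reliability gain, which is precisely where the stability condition \eqref{stability_condition_markov} enters, since it is what keeps $h$ finite while leaving its increments in the age coordinate exponentially large; this is the step I expect to demand the most care.
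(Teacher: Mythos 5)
Your reduction is sound as far as it goes: because the one-stage cost \eqref{one-stage cost} is action-independent and a new transmission resets the counter, the optimal action is indeed governed by the sign of $\Delta=Q(s,0)-Q(s,1)$, $Q(s,0)$ is independent of $\mathbf{\Omega}$ by \eqref{transition_function_markov}, and your differencing identity for $\Delta(\mathbf{\Omega},q+1,\Xi)-\Delta(\mathbf{\Omega},q,\Xi)$ is correct, so part (ii) would follow from (P1)+(P2) and part (i) from $\mathbf{\Omega}$-monotonicity of $Q(\cdot,1)$. The genuine gap is that the two properties carrying all of the difficulty are never established. (P2) is a submodularity property of $h$ in $(\mathbf{\Omega},q)$; its preservation under the Bellman operator is asserted, not proved, and this step is not routine since a pointwise minimum of submodular functions is not submodular in general --- the induction requires a case analysis over the minimizing actions at the states being compared, which is exactly where such arguments succeed or fail. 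Worse, the $\mathbf{\Omega}$-monotonicity of $Q(\cdot,1)$ needed for part (i) is explicitly deferred as ``the hard part''; a proof that ends by flagging its crux as future work is a plan, not a proof.

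The deferral also points the plan in a direction the result does not require. The paper proves the static analogue (Theorem~\ref{theorem:switching}, Appendix B) with no value-function induction at all: it verifies the four conditions of Theorem 8.11.3 in \cite{puterman2014markov}, where the action-independent cost makes conditions (1)--(2) trivial and conditions (3)--(4) reduce to monotonicity/superadditivity of the tail transition sums $\tilde{P}(s'|s,a)$, which have explicit indicator forms because the counter evolves deterministically (reset under $a=0$, increment under $a=1$) with only the age branching on success/failure; the present theorem is proved the same way, ordering by one component of $\mathbf{\Omega}$ with $(q,\Xi)$ fixed, resp.\ by $q$ with $(\mathbf{\Omega},\Xi)$ fixed. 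In particular the stability condition \eqref{stability_condition_markov} plays no role in the structural result --- it is used only for existence in Theorem~\ref{theorem:stability_markov} --- so your expectation that \eqref{stability_condition_markov} is what closes the argument (``exponential age penalty outweighs bounded reliability gain'') signals a magnitude-comparison strategy that is both unnecessary and hard to make rigorous, since both competing terms in your inequality are weighted by the same exponentially growing value-function increments. A further technical loose end: on a countable state space with unbounded cost, carrying structure through average-cost relative value iteration and ``passing to the limit'' needs justification; the standard safe route is to prove the structure for discounted value functions and use a vanishing-discount argument.
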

\begin{proof}
		The proof is similar to that of Theorem~\ref{theorem:switching} and is omitted due to the space limitation.
\end{proof}

\subsection{Optimal Policy: A Special Case}
For the high SNR scenario, we assume that a retransmission is always successful. 
Thus, the state transition probability \eqref{transition_function_markov} does not depends on all the individual element of the historical channel-state vector $\mathbf{\Omega}$, and we can simply combine the states in $\mathbb{S}$ by $\| \mathbf{\Omega} \|_1$ as the state $s =(\| \mathbf{\Omega} \|_1,q,\Xi) $ to reduce the state space.

Similar to the static channel scenario, the state space of the optimal policy can be further reduced as $\mathbb{S}' = \{(2,2,\Xi), (1,q,\Xi), \forall q=1,2,...,\theta_{\max}+1,\Xi=1,2,...,B \}$, and the optimal policy for states $(1,q,\Xi), \forall q \in \{\theta_{\Xi}+1,\cdots,\theta_{\max}+1\}$ is $a=1$, where $\theta_{\max} \triangleq \max \theta_{\Xi}$, and the other states have the action $a=0$, as illustrated in Fig.~\ref{fig:special_markov}.
Different from the static channel scenario, the optimal policy for the $B$-state Markov channel has a set of parameters, i.e., $\{\theta_1,\cdots,\theta_B\}$, to be optimally designed.

We can reorder the three dimensional states as a $B\times (\theta_{\max}+2)$ state (column) vector, $\mathbf{b}$, and the states $(2,2,\Xi)$ and $(1,q,\Xi)$ are the $\left(1+(\Xi-1)(\theta_{\max}+2)\right)$th and $\left(1+q+(\Xi-1)(\theta_{\max}+2)\right)$th elements of $\mathbf{b}$, respectively.
Using the state transition probability \eqref{transition_function_markov} and the transition rule of the special case as illustrated in Fig.~\ref{fig:special_markov}, the matrix of the state transition probability can be written as
\begin{equation} \label{M_matrix}
\mathbf{M} = \left[
\begin{array}{c|c|c|c}
\mathbf{p}_1 \otimes \mathbf{M}_1 & \mathbf{p}_2 \otimes \mathbf{M}_2& \cdots & \mathbf{p}_B \otimes \mathbf{M}_B
\end{array}
\right],
\end{equation}
where the $\otimes$ is the Kronecker product operator, $\mathbf{p}_i$ is the $i$th column of $\mathbf{\Pi}$ defined in \eqref{P_matrx}, and $\mathbf{M}_i$ is the 
\begin{equation}
\mathbf{M}_i = \left[
\begin{array}{c|c}
\mathbf{E}_i&\mathbf{F}_i\\
\hline
\left[0\right]_{(\theta_{\max} -\theta_i) \times (\theta_i+2)}&\left[0\right]_{(\theta_{\max} -\theta_i) \times (\theta_{\max} -\theta_i)}
\end{array}
\right],
\end{equation}
\begin{equation}
\begin{array}{c c}
\mathbf{E}_i = 
\begin{bmatrix}
0 & 0 & 0  & \cdots &0& 1\\
1-\Lambda'_i &1-\Lambda'_i  & 1-\Lambda'_i & \cdots   & 1-\Lambda'_i & 0\\
0 & \Lambda'_i & 0 &0 &\cdots& 0\\
\Lambda'_i & 0& \Lambda'_i &\cdots & 0 & 0\\
\vdots & \vdots& \vdots &{\ddots} &\vdots& \vdots\\
0 & 0& 0 &\cdots &\Lambda'_i& 0\\
\end{bmatrix}_{(\theta_i+2)\times(\theta_i+2)}  \hspace{-1.5cm}\text{ and }
&
\mathbf{F}_i = 
\begin{bmatrix}
1 & \cdots & 1\\
0 & \cdots & 0\\
\vdots & \ddots &\vdots\\
0 & \cdots & 0\\
\end{bmatrix}_{(\theta_i+2)\times(\theta_{\max}-\theta_i)}.
\end{array}
\end{equation}

Based on the stochastic matrix \eqref{M_matrix}, we can calculate the steady state distribution with a given set of policy-switching parameters. By numerically optimizing $\{\theta_1,\cdots,\theta_B\}$, we have the following result.
\begin{proposition} \label{proposition_special_case_markov}
	\normalfont
In the high SNR scenario, the minimum long-term average MSE of the Markov channel is given as 
\begin{equation}
\zeta^\star = \frac{\mathbf{c}^T \mathbf{e}}{\| \mathbf{e} \|_1},
\end{equation}
where $\mathbf{c} = \underbrace{\left[{1,\cdots,1}\right]^T}_B \otimes [\text{Tr}f^2(\bar{\mathbf{P}}_0),\text{Tr}f(\bar{\mathbf{P}}_0),\text{Tr}f^2(\bar{\mathbf{P}}_0),\cdots,\text{Tr}f^{\theta^\star_{\max}+1}(\bar{\mathbf{P}}_0)]^T$ and  $\mathbf{e}$ is a null-space vector of $\left(\mathbf{M}^{\star}-\mathbf{I}\right)$ with non-negative values, and here $\mathbf{I}$ is the $B(\theta^\star_{\max}+2)$ by $B(\theta^\star_{\max}+2)$ identity matrix.
\end{proposition}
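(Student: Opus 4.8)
The plan is to mirror the argument used for the static-channel special case (Proposition~\ref{proposition_special_case}): reduce the countably infinite MDP of Section~\ref{sec:Markov} to a finite, ergodic Markov chain under a threshold policy, and then apply the ergodic theorem to express the long-term average cost \eqref{cost_function} as the stationary expectation of the one-stage cost \eqref{one-stage cost}.

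First I would justify the reduction of the state space to the finite recurrent set $\mathbb{S}'$. Under the high-SNR assumption a retransmission is always successful, so in the transition probabilities \eqref{transition_function_markov} we have $\tilde{g}(\mathbf{\Omega},\Xi)\to 0$; hence every state with $r>1$ other than $(2,2,\Xi)$ is never revisited and is transient, while combining states by $\|\mathbf{\Omega}\|_1$ is lossless because the transition no longer depends on the individual entries of $\mathbf{\Omega}$. Invoking the switching structure of Theorem~\ref{theorem:structural_markov}(ii), the action on the line $r=1$ is monotone in $q$ for each channel index $\Xi$, hence characterized by a threshold $\theta_\Xi$: a new transmission for $q\le\theta_\Xi$ and a retransmission for $q>\theta_\Xi$. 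This renders all states $(1,q,\Xi)$ with $q>\theta_\Xi+1$ transient, leaving the finite recurrent set $\mathbb{S}'$ depicted in Fig.~\ref{fig:special_markov}.

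Next, for a fixed parameter set $\{\theta_1,\dots,\theta_B\}$, I would verify that the transition matrix restricted to $\mathbb{S}'$ is exactly $\mathbf{M}$ in \eqref{M_matrix}: the channel evolves independently according to $\mathbf{\Pi}$, contributing the factor $\mathbf{p}_i$, while within each channel block the AoI/attempt dynamics give the sub-block $\mathbf{M}_i$, whence the Kronecker structure. Since $\mathbf{M}$ is a nonnegative column-stochastic matrix whose induced chain is irreducible and aperiodic on $\mathbb{S}'$, it admits a unique stationary distribution, namely the unique (up to scaling) nonnegative right eigenvector for eigenvalue one, i.e. the nonnegative $\mathbf{e}$ in the null space of $(\mathbf{M}-\mathbf{I})$; normalizing yields the stationary probabilities $\mathbf{e}/\|\mathbf{e}\|_1$. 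Then, by the ergodic theorem for finite irreducible chains, the average MSE equals the stationary expectation of the one-stage cost. Reading off $\text{Tr}(f^{q}(\bar{\mathbf{P}}_0))$ for each state in the ordering of the column vector $\mathbf{b}$ — $\text{Tr}f^2(\bar{\mathbf{P}}_0)$ for $(2,2,\Xi)$ and $\text{Tr}f^{q}(\bar{\mathbf{P}}_0)$ for $(1,q,\Xi)$ — reproduces exactly the cost vector $\mathbf{c}$, so the average cost under $\{\theta_\Xi\}$ is $\mathbf{c}^T\mathbf{e}/\|\mathbf{e}\|_1$. Minimizing this scalar over the finite family of relevant threshold vectors selects $\mathbf{M}^\star$ and its associated $\mathbf{e}$, giving the stated $\zeta^\star$.

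The main obstacle is the second step: carefully confirming that collapsing $\mathbf{\Omega}$ to $\|\mathbf{\Omega}\|_1$ preserves the Markov property and that the resulting finite chain is genuinely irreducible and aperiodic, so that a unique stationary distribution exists and the ergodic time-average coincides with the stationary expectation. Once this is established, the remaining algebra that assembles $\mathbf{M}$ from $\mathbf{\Pi}$ and the blocks $\mathbf{M}_i$ is bookkeeping already set up in \eqref{M_matrix}, and the final formula follows immediately from $\zeta^\star=\mathbf{c}^T\mathbf{e}/\|\mathbf{e}\|_1$.
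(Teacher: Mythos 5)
Your proposal is correct and follows essentially the same route as the paper, whose "proof" of Proposition~\ref{proposition_special_case_markov} is the construction immediately preceding it: high-SNR reduction of the state space via $\|\mathbf{\Omega}\|_1$, the threshold structure $\{\theta_1,\dots,\theta_B\}$ implied by Theorem~\ref{theorem:structural_markov}, the transition matrix $\mathbf{M}$ in \eqref{M_matrix}, and the steady-state distribution (the nonnegative null-space vector of $\mathbf{M}^\star-\mathbf{I}$) combined with the cost vector $\mathbf{c}$ and numerical optimization over the thresholds. Your addition of the explicit ergodic-theorem justification and the irreducibility/aperiodicity check (which holds since all $p_{i,j}>0$) merely fills in details the paper leaves implicit.
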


\section{Numerical Results} \label{sec:num}
\subsection{Delay-Optimal Policy: A Benchmark} \label{sec:delay_optimal}
We also consider a delay-optimal policy based on the HARQ protocol in \cite{ceran2018average}, as the benchmark of the proposed optimal policy.    
We use the average AoI to measure the delay of the system.   
Therefore, similar to the MSE-optimization problem~\eqref{problem}, the delay optimization problem is formulated as
\begin{equation} \label{problem2}
\min_{\pi}\limsup_{K\to\infty}\frac{1}{K}\sum_{k=1}^{K} \mathbb{E}\left[q_k\right].
\end{equation}
This~problem can also be converted to a MDP problem with the same state space, action space and state transition function~as presented in Sec.~\ref{sec:MDP}. The one-stage cost in terms of delay~is 
\begin{equation} \label{one-stage cost2}
c((r,q),a) = q.
\end{equation}

Comparing \eqref{one-stage cost2} with \eqref{one-stage cost}, we see that the cost function of the delay-optimal policy is a linear function of $q$, while it grows exponentially fast with $q$ in the optimal policy. Thus, these two policies should be very different and their performance will be compared in the following section.   
   
\subsection{Simulation and Policy Comparison}   
In the remainder of the section, we present numerical results of the optimal policies for static and Markov channels in Sec.~\ref{sec:static channel} and Sec.~\ref{sec:Markov}, respectively, and their performance. Also, we numerically compare these optimal policies with the benchmark policy in Sec.~\ref{sec:delay_optimal}.
Unless otherwise stated, we consider CC-HARQ, and we set $\mathsf{SNR} =10$~dB, $L=100$, $R=4$,
$\mathbf{A} = \begin{bmatrix}
2.4 & 0.2 \\
0.2 & 0.8
\end{bmatrix}$, $\mathbf{C} = \begin{bmatrix}
1 & 1
\end{bmatrix}$, $\mathbf{Q}_w = I$, $\mathbf{Q}_v = 1$, 
and thus
$\rho^2(\mathbf{A}) = 1.8385^2$, 
$\bar{\mathbf{P}}_0 = \begin{bmatrix}
2.5548  &  -1.6233 \\
-1.6233 &  1.6179
\end{bmatrix}$. 

The packet error probabilities for a new transmission and a retransmission of the CC/IR-HARQ protocol are based on taking the approximation \eqref{approx_HARQ} into \eqref{error_1} and \eqref{error_2}, respectively.
We use the relative value iteration algorithm in a MDP toolbox in $\mathsf{MATLAB}$~\cite{matlab} to solve the MDP problems in Sections~\ref{sec:static channel},~\ref{sec:Markov} and~\ref{sec:delay_optimal}.
%
%

\subsubsection{Static Channel}
\ 
\\
\emph{\underline{Policy Comparison.}}
To solve the MDP problem with an infinite state space, 
the unbounded state space $\mathbb{S}$ is truncated as $\{(r, q) : 1 \leq r\leq q\leq 20\}$ to enable the evaluation. We set the channel power gain $h=2$.
Using Theorem~\ref{theorem:existence}, we can verify that the CC/IR-HARQ based optimal policy exists.
Fig.~\ref{static_policy_SNR_20} shows different policies  within the truncated state space.
In Fig.~\ref{static_policy_SNR_20}(a), we see that in line with Theorem~\ref{theorem:switching}, the optimal policy of CC-HARQ protocol is a switching-type one, where the actions of the states that are close to the states with $r=q$, are equal to zero, i.e., new transmissions are required.
Also, we see that the suboptimal policy plotted in Fig.~\ref{static_policy_SNR_20}(d) is a good approximation of the optimal one within the truncated state space. However, the delay-optimal policy plotted in Fig.~\ref{static_policy_SNR_20}(c) is very different from the previous ones, where more states have the action of new transmission.
Therefore, HARQ-based retransmissions are more important to reduce the estimation MSE than the delay.
Fig.~\ref{static_policy_SNR_20}(b) presents the optimal policy of the IR-HARQ protocol which is identical with that of CC-HARQ in Fig.~\ref{static_policy_SNR_20}(a). This is because when the channel power gain is high, e.g., $h=2$, both IR- and CC-HARQ can provide sufficiently high retransmission reliability and the transmission control policy are the same. However, we can show that the optimal policy for CC- and IR-HARQ are different when the channel power gain is low and IR-HARQ can provide much better retransmission reliability. The policy diagram is not included due to the space limitation.

\begin{figure*}[t]
	\minipage{0.45\textwidth}
	\centering
	\includegraphics[scale=0.55]{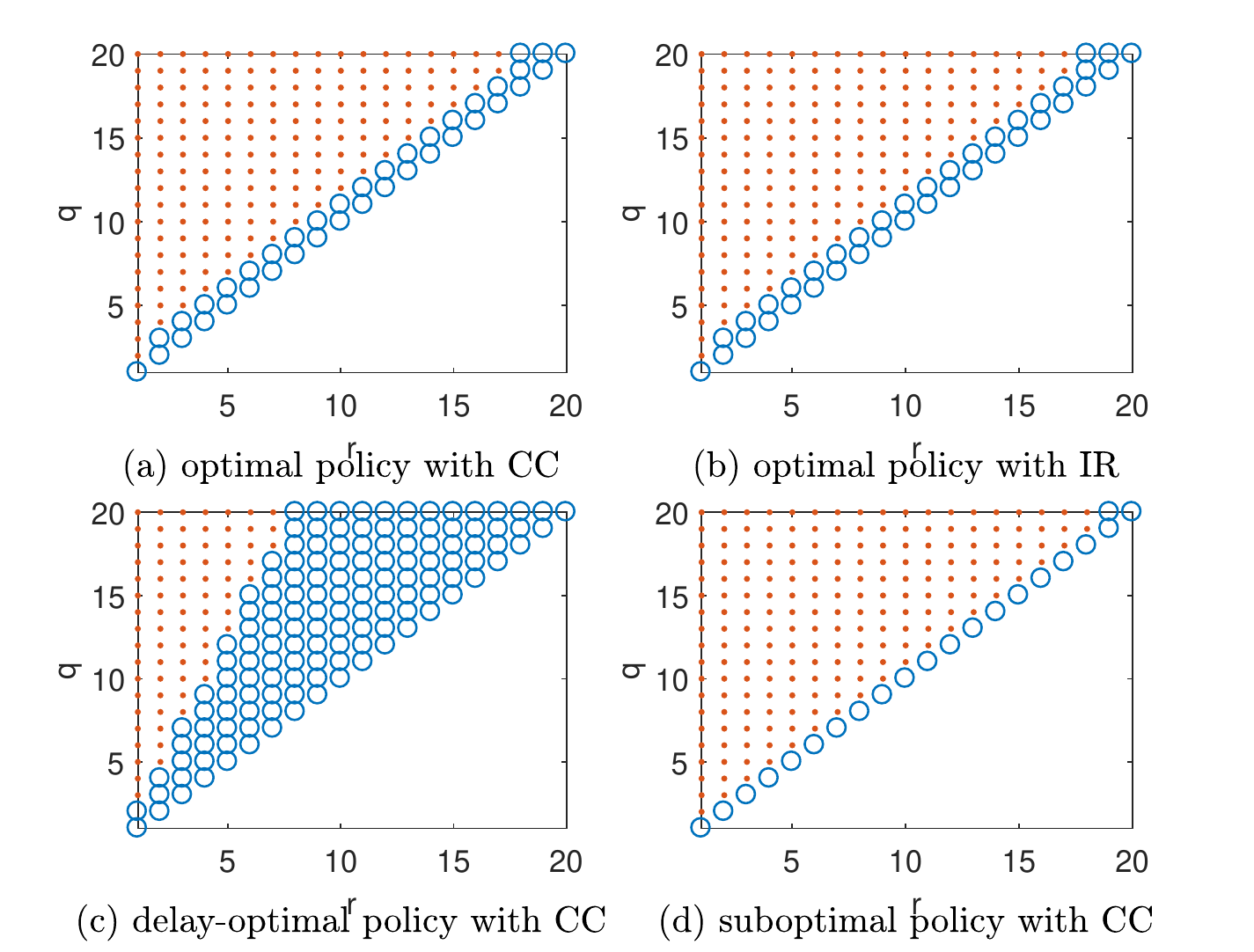}
\vspace{-1.2cm}
\caption{An illustration of different transmission control policies, where `o' and `$\cdot$' denote $a =0$ and $a=1$, respectively.}
\label{static_policy_SNR_20}
	\endminipage
	\hspace{0.5cm}
	\minipage{0.5\textwidth}	
	\centering
	\includegraphics[scale=0.6]{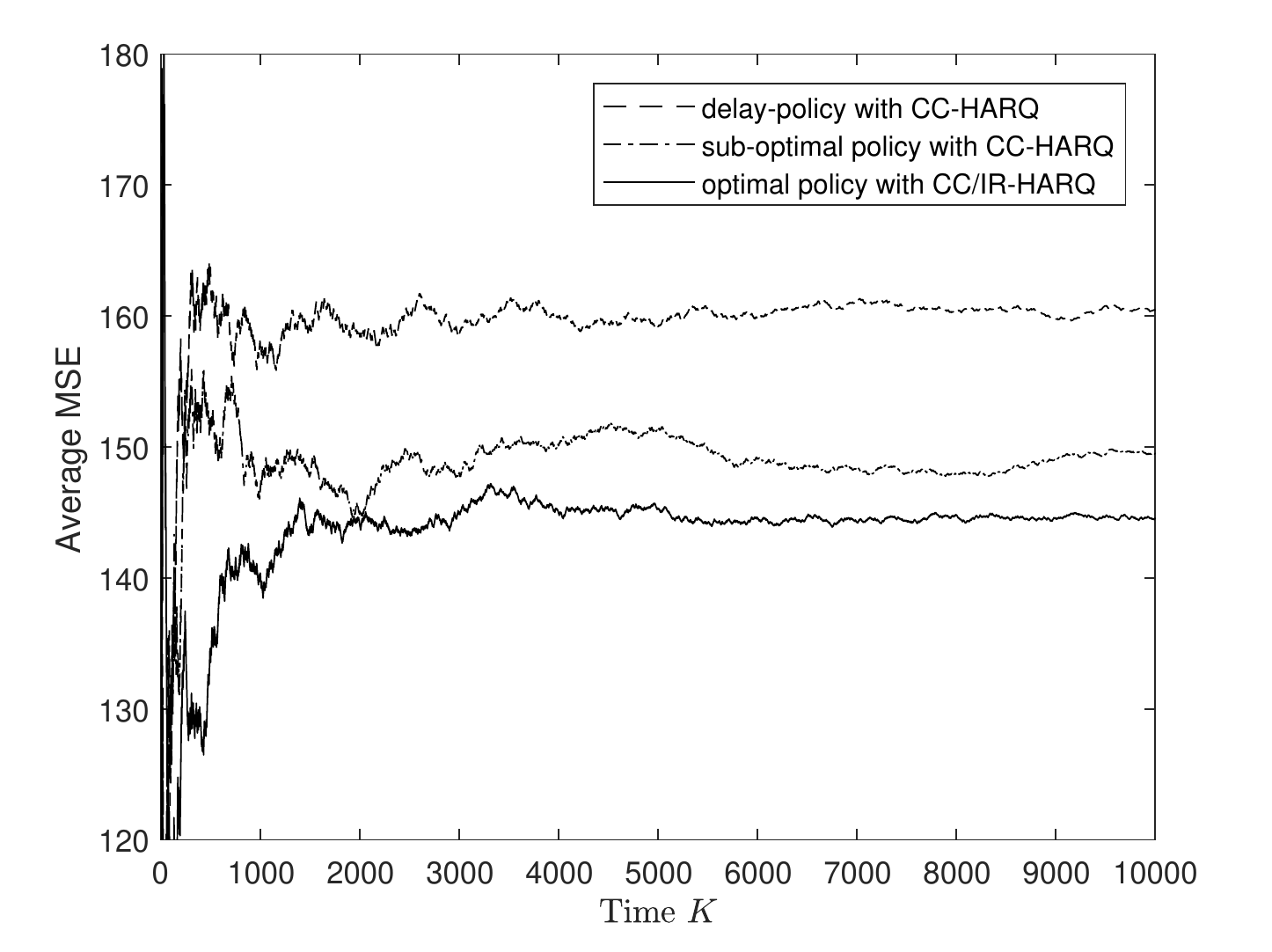}
\vspace{-1.0cm}
\caption{Average MSE with different policies of the static channel.}
\label{static_simulation_SNR_20}
	\endminipage
	\vspace*{-0.7cm}
\end{figure*}

%

\emph{\underline{Performance Comparison.}}
Based on the above numerically obtained polices and the policy with the standard ARQ, i.e., the one without retransmission (see Sec.~\ref{sec:optimal transmission control}), we further evaluate their performances in terms of the long-term average MSE using \eqref{cost_function}.
We run the remote estimation process with $10^4$ time slots and set the initial value of $\mathbf{P}_k$ as $\mathbf{P}_0 = f(\bar{\mathbf{P}}_0) = \begin{bmatrix}
14.2218  & -1.6966\\
-1.6966  &  1.6179
\end{bmatrix}$.
Also, we set $\text{Tr}(\mathbf{P}_0) =~15.8$ as the \emph{performance baseline}, as $\text{Tr}(\mathbf{P}_0)\leq \text{Tr}(\mathbf{P}_k)$, $\forall k$.

Fig.~\ref{static_simulation_SNR_20} plots the average MSE versus the simulation time $K$, using different transmission control policies.
Our simulation shows that the conventional non-retransmission policy has an \emph{unbounded average MSE}, which is not shown in the figure due to the ultra fast growth rate. However, we show that the average MSEs of different HARQ-based policies quickly converge to bounded steady state values. Therefore, the proposed HARQ-based policy can significantly improve the estimation quality against the conventional policy.
Also, we see that the performance of the suboptimal policy is very close to the optimal one.
Given the performance baseline, the optimal policy gives a $10\%$ MSE reduction of the delay-optimal policy, which demonstrates the superior performance of the proposed optimal policy.


\begin{figure*}[t]
	\minipage{0.45\textwidth}
	\centering
	\includegraphics[scale=0.5]{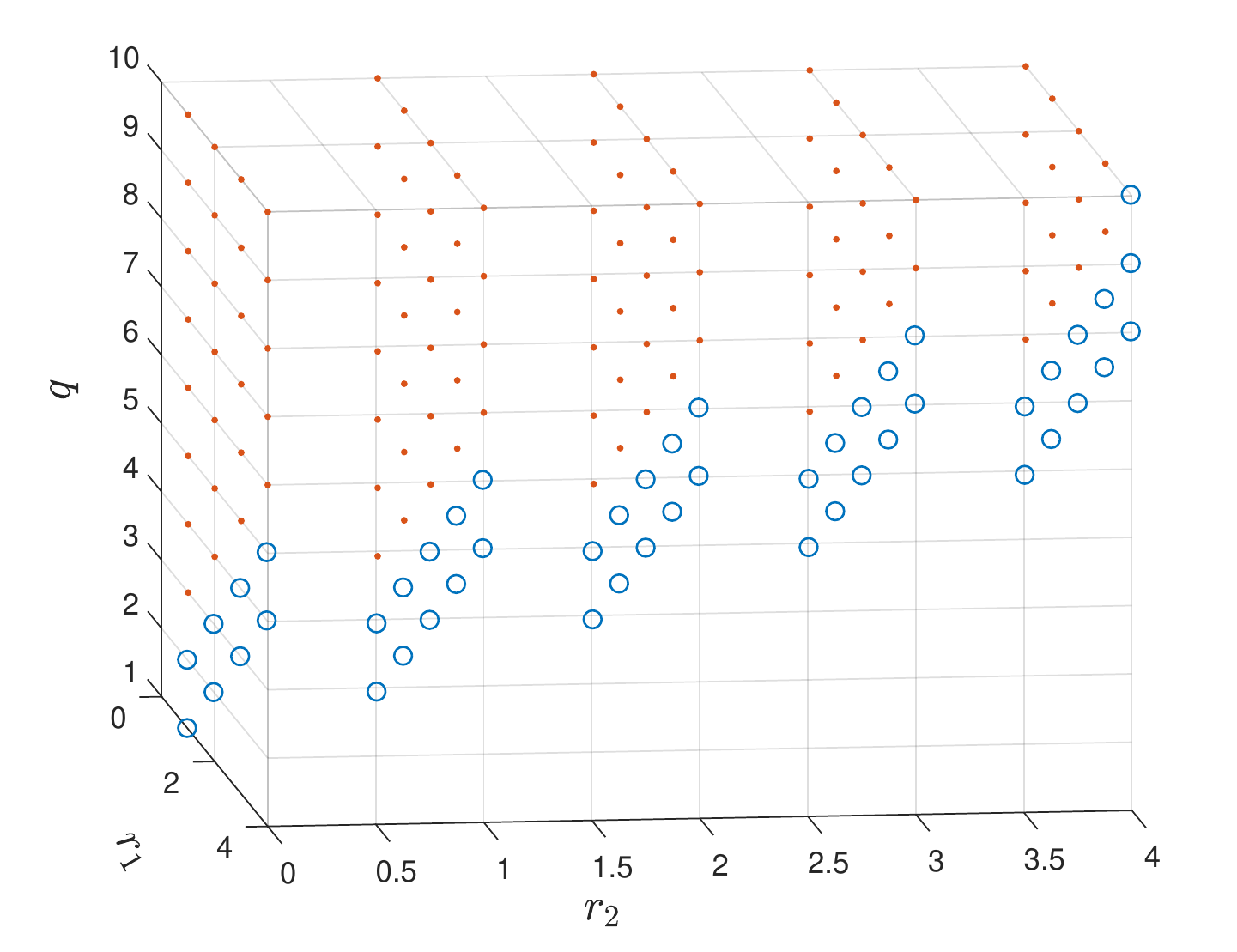}
	\vspace{-1.1cm}
	\caption{Optimal transmission control policy at channel~1, where `o' and `$\cdot$' denote $a =0$ and $a=1$, respectively.}
	\label{Markov_policy_on_channel1}
	\endminipage
	\hspace{1cm}
	\minipage{0.45\textwidth}	
	\centering
	\includegraphics[scale=0.5]{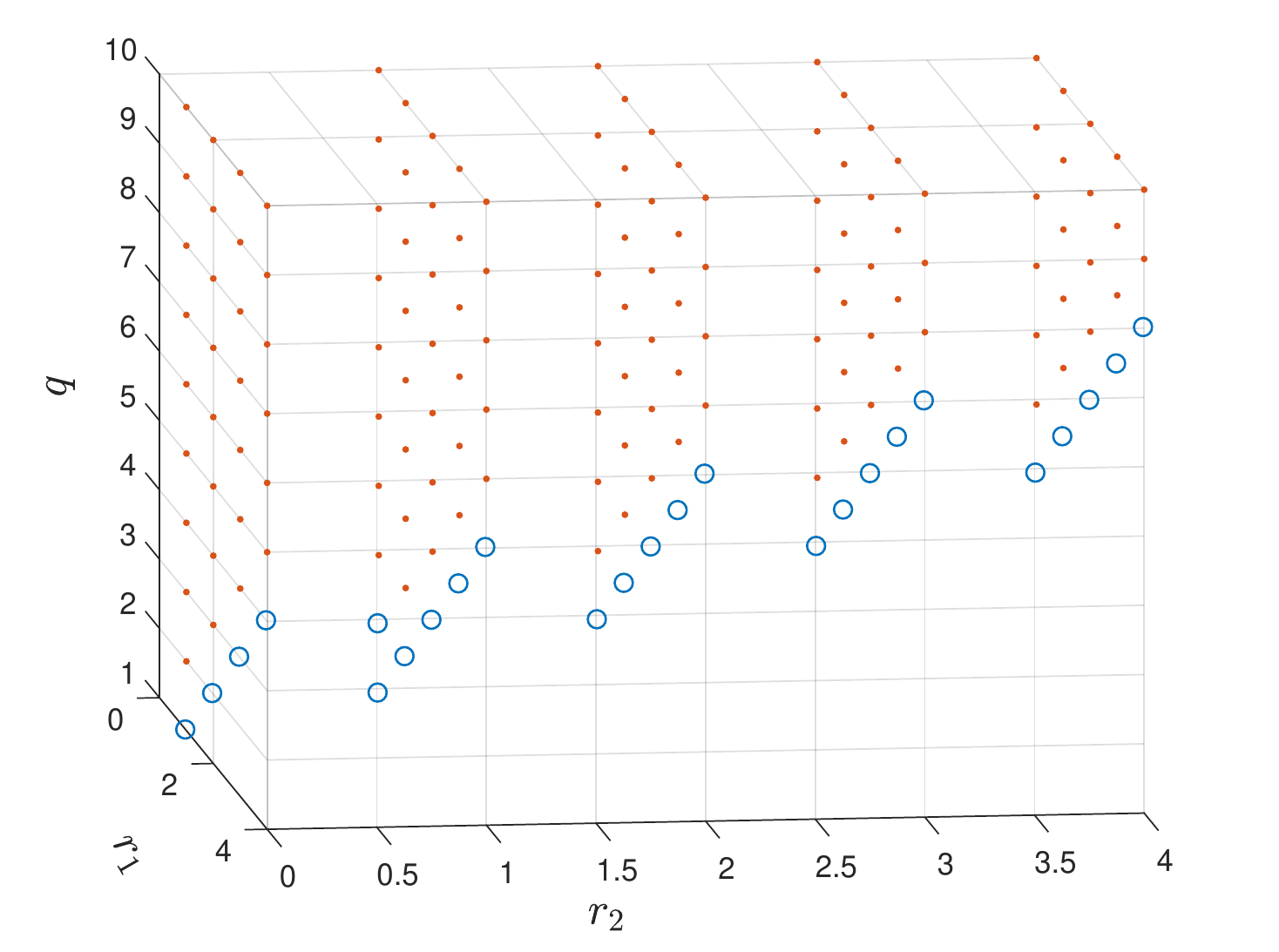}
	\vspace{-1cm}
	\caption{Optimal transmission control policy at channel~2, where `o' and `$\cdot$' denote $a =0$ and $a=1$, respectively.}
	\label{Markov_policy_on_channel2}
	\endminipage
	\vspace*{-0.7cm}
\end{figure*}

\subsubsection{Markov Channel}
We consider a two state Markov channel with the channel power gains $u_1 = 2$ and $u_2 =1$. The matrix of channel state transition probability $\mathbf{\Pi} = \begin{bmatrix}
0.8 & 0.2\\
0.2 &0.8
\end{bmatrix}.$
Using Theorem~\ref{theorem:stability_markov}, we can verify that the CC/IR-HARQ based optimal policy exists.
To solve the MDP problem in the Markov channel scenario with an infinite state space, 
the unbounded state space $\mathbb{S}$ is truncated as $\{(r_1,r_2, q, \Xi) : 0\leq r_1 \leq 4,0\leq r_2 \leq 4, 1\leq (r_1 + r_2 )\leq q\leq 10, \Xi =1,2\}$ to enable the evaluation, where $(r_1,r_2) \triangleq \mathbf{\Omega}$ is the state vector of the historical channel states.
Figs.~\ref{Markov_policy_on_channel1} and ~\ref{Markov_policy_on_channel2} show the optimal transmission control policy under channel 1 ($h=2$) and 2 ($h=1$), respectively.
We can see the switching structure of the optimal policy.
Also, we see that new transmission occurs more often in the good channel than in the bad channel.

We can also calculate the suboptimal policy and delay-optimal policy of the Markov channel, and the computation complexity of these policies together with the ones of the static channel are listed in Table~\ref{tab}.
We see that the numbers of convergence steps for calculating these policies are less than $100$, and the optimal policy has a larger number of convergence steps than the delay-optimal policy and hence a higher computation complexity.

%
%

\emph{\underline{Performance Comparison.}}
We evaluate the non-retransmission policy, CC/IR-HARQ based optimal policy, CC-HARQ based suboptimal policy and delay-optimal policy in terms of the long-term average MSE using \eqref{cost_function}.
We run the remote estimation process with $10^4$ time slots and set the initial value of $\mathbf{P}_k$ as $\mathbf{P}_0$.

Fig.~\ref{Markov_simulation} plots the average MSE versus the simulation time $K$, using different transmission control policies.
Our simulation shows that the non-retransmission policy has an \emph{unbounded average MSE}.
We show that the average MSEs of different policies quickly converge to bounded steady state values. Therefore, the proposed HARQ-based policy can also significantly improve the estimation quality against the conventional policy in the Markov channel scenario.
We see that the performance of the suboptimal policy is very close to the optimal one.
Given the performance baseline, the optimal policy reduces MSE by $33\%$ for the delay-optimal policy. 
Also, unlike the static channel scenario, we see that the IR-HARQ based optimal policy significantly reduces the average MSE by $87\%$ compared to the CC-HARQ based optimal policy. This is because IR-HARQ can provide much better retransmission reliability than CC-HARQ especially when the channel quality is bad.
\begin{table*}[t]
	\caption{Computation complexity of the policies in the static and Markov channels.}
	\vspace{-0.5cm}
	\label{tab}
	\begin{tabular}{|l|c|c|c|c|c|c|c|c|}
		\hline
		& \multicolumn{5}{c|}{Computation complexity~\cite{littman1995complexity}}                                                        & \multicolumn{3}{c|}{The number of convergence steps, $K$} \\ \hline
		&$M$& $N$	& Optimal                     & Suboptimal                  & Delay-optimal                       & Optimal      & Suboptimal    & Delay-optimal       \\ \hline
		Static channel & $2$ & $210$  & $\mathcal{O}(MN^2 K)$                    & $\mathcal{O}(1)$                      & $\mathcal{O}(MN^2 K)$                         & $58$     & 1             & $41$    \\ \hline
		Markov channel & $2$ & $328$ & $\mathcal{O}(MN^2 K)$      & $\mathcal{O}(1)$      & $\mathcal{O}(MN^2 K)$     & $90$    & 1             & $30$   \\ \hline
	\end{tabular}
	\vspace{-0.7cm}
\end{table*}
%
\begin{figure}[t]
	\centering
	\includegraphics[scale=0.65]{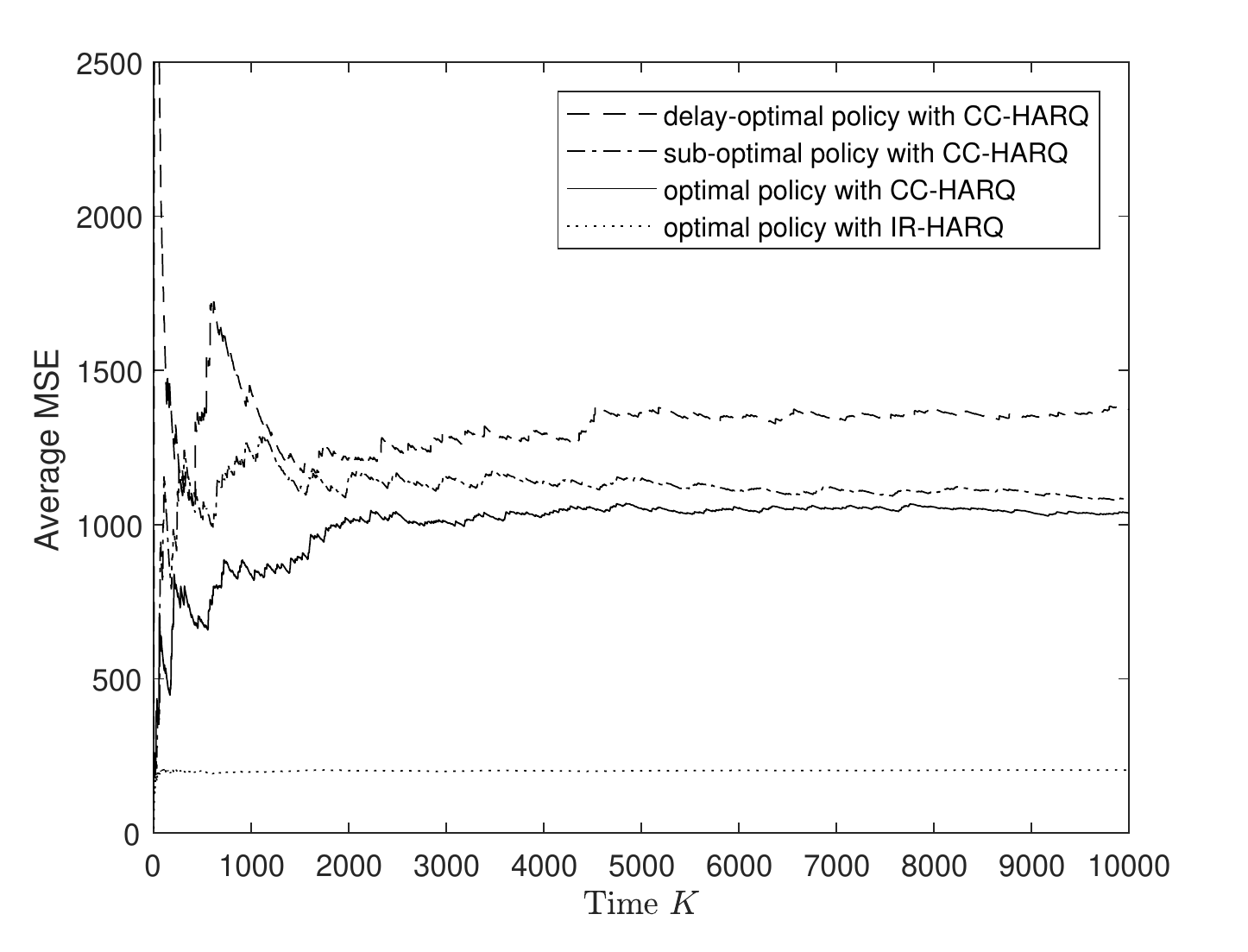}
	\vspace{-0.5cm}
	\caption{Average MSE with different policies of the Markov channel}
	\label{Markov_simulation}
	\vspace{-0.7cm}
\end{figure}

\section{Conclusions}\label{sec:con}
We have proposed and optimized a HARQ-based remote estimation protocol for real-time applications.
Our results have shown that the optimal policy can significantly reduce the estimation MSE for some practical settings.
As the recent
communication standards for real-time wireless control, such
as WirelessHART, ISA-100 and IEEE 802.15.4e, have not
adopted any HARQ techniques,
this work also suggests that HARQ can be adopted by the future real-time communication standards to enhance the performance of mission-critical remote estimation/control systems.

     \section*{Appendix A: Proof of Theorem~\ref{theorem:existence}}
	To prove the existence of a stationary and deterministic optimal policy given condition~\eqref{stability_condition}, we need to verify the following {conditions}~\cite[Corollary 7.5.10]{sennott2009stochastic}:
	(CAV*1) there exists a standard policy $\psi$ such that the recurrent class $R_{\psi}$ induced by $\psi$ is equal to the whole state space $\mathbb{S}$;
    (CAV*2) given $U > 0$, the set $\mathbb{S}_U = \{s|c(s,a) \leq U  \mbox{ for some } a\}$ is finite.	    
   
    Condition (CAV*2) can be easily verified based on \eqref{one-stage cost}. In what follows, we verify (CAV*1) by first constructing a policy $\psi$ and then proving that it is a standard policy.
    
    The action of the policy $\psi$ is given as 
    \begin{equation} \label{standard}
    a = \begin{cases}
    0, &\text{if}\ r=q\\
    1, &\text{otherwise}.
    \end{cases}
    \end{equation}
    It is easy to prove that any state in $\mathbb{S}$ induced by $\psi$ is a recurrent state.
    We then prove that $\psi$ is a standard policy by verifying both the expected first passage cost and time from state $(r,q) \in \mathbb{S}$ to $(0,0)$ are bounded~\cite{sennott2009stochastic}. Due to the space limitation, we only prove that any state with $r=q$ has bounded first passage cost and time. The other states can be proved similarly. 
 \begin{figure}[t]
	\centering\includegraphics[scale=1]{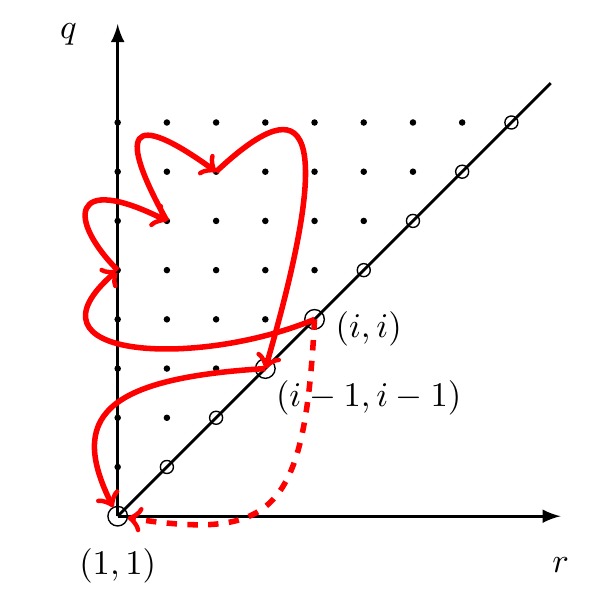}
	\vspace{-0.5cm}
	\caption{Two of the possible first passage paths from $(i,i)$ to $(1,1)$, i.e., $(i,i)\rightarrow (1,1)$ and $(i,i)
		\rightarrow (1,i+1)
		\rightarrow (2,i+2)
		\rightarrow (3,i+3)
		\rightarrow (i-1,i-1)
		\rightarrow (1,1)$, where $i=5$.}
	\label{fig:proof1}
	\vspace{-0.7cm}
\end{figure}
    For notational simplicity, the expected first passage cost of the state $(i,i)$ is denoted as $d(i)$, and the one-stage cost \eqref{one-stage cost} is rewritten as    
    \begin{equation}
    c(q) \triangleq c((r,q),a) = \text{Tr}\left(f^{q}(\bar{\mathbf{P}}_0)\right).
    \end{equation} 
    Based on \eqref{static_channel_loss}, \eqref{standard} and the law of total expectation of the first passage cost of all the possible first passage paths (as illustrated in Fig.~\ref{fig:proof1}), the expected first passage cost $d(i)$ can be obtained~as
\begin{equation} \label{equation_group}
\begin{aligned}
&d(i) = c(i) + (1-g(1))c(1) + g(1) c(i+1)\\
&+ g(1) (1-g(2)) d(2) + g(1) g(2) c(i+2)\\
&+ g(1) g(2) (1-g(3)) d(3) + g(1) g(2) g(3) c(i+3) + \cdots\\
&= \nu(i) +(1-g(1))c(1)+ D, \forall i >0,
\end{aligned}
\end{equation}
where $g(1) = \Lambda'_0$,
\begin{align}
\label{h}
&\nu(i) = c(i) +\sum_{j=1}^{\infty} \alpha_j c(i+j),\ D= \sum_{j=2}^{\infty} \beta_j d(j),
\end{align}
and $\alpha_j = \prod_{l=1}^{j} g(l)$ and $\beta_j = \prod_{l=1}^{j-1} g(l)(1-g(j))$.
Therefore, $d(i)$ is bounded if $\nu(i)< \infty$ and $D< \infty$. Using the inequality~\eqref{inequality_2}, we have 
\begin{equation} \label{inequality_alphabeta}
\begin{aligned}
&\alpha_j \leq \Lambda'_0\Lambda^{j-1}_0, &&j\geq 1, \\
&\beta_j \leq \Lambda'_0\Lambda^{j-2}_0, &&j\geq 2.
\end{aligned}
\end{equation}

From~\cite{schenato2008optimal}, we have $\sum_{j=1}^{\infty} \Lambda^j_0c(j) < \infty$ iff $\Lambda_0 \rho^2(\mathbf{A})<1$.
Thus, it is easy to prove that $\nu(i) < \infty$ if \eqref{stability_condition} holds.

From \eqref{equation_group}, $D$ can be further derived after simplifications as
\begin{equation}
D = \frac{1}{1-\sum_{i=2}^{\infty}\beta_i} \left(\sum_{i=2}^{\infty} \beta_i (1-g(1))c(1) + \sum_{i=2}^{\infty} \beta_i \nu(i) \right).
\end{equation}
As $\sum_{i=2}^{\infty}\! \beta_i < g(1)\!<\! 1$, $D$ is bounded as long as $\sum_{i=2}^{\infty} \!\beta_i \nu(i) \!<\! \infty$. Using the inequalities~\eqref{inequality_alphabeta}, after some simplifications, we have
\begin{equation}
\begin{aligned}\label{temp}
\sum_{i=2}^{\infty} \beta_i \nu(i) 
\leq 
\eta \sum_{j=2}^{\infty} \Lambda^{j-1}_0 c(j)+
\eta^2\sum_{j=3}^{\infty}  (j-2) \Lambda^{j-1}_0 c(j),
\end{aligned}
\end{equation}
where $\eta = \Lambda'_0/\Lambda_0$.
It can be proved that $\sum_{j=3}^{\infty}  (j-2) \Lambda^{j-1}_0 c(j)$ is bounded if $\sum_{j=1}^{\infty} \Lambda^{j}_0 c(j)$ is bounded. Again, using the result that 
$\sum_{j=1}^{\infty} \Lambda^{j}_0 c(j) < \infty$ iff $\Lambda_0 \rho^2(\mathbf{A})<1$ in \cite{schenato2008optimal}, $\sum_{i=1}^{\infty} \beta_i \nu(i)<\infty$ if $\Lambda_0 \rho^2(\mathbf{A})<1$, yielding the proof of the bounded expected first passage cost with condition~\eqref{stability_condition}. Similarly, we can verify that the expected first passage time
is also bounded.

    \section*{Appendix B: Proof of Theorem~\ref{theorem:switching}}
    	The switching property is equivalent to the monotonicity of the optimal policy in $r$ if $q$ is fixed and in $q$ if $r$ is fixed. The monotonicity can be proved by verifying the following conditions (see Theorem 8.11.3 in \cite{puterman2014markov}).
    
    (1) $c(s,a)$ is nondecreasing in $s$ for all $a \in \mathbb{A}$;
    
    (2) $c(s,a)$ is a superadditive function on $\mathbb{S} \times \mathbb{A}$;
    
    (3) $\tilde{P}(s'|s,a)= \sum_{i=s'}^{\infty} P\left(i|s,a\right)$ is nondecreasing in $s$ for all $s'\in \mathbb{S}$ and $a \in \mathbb{A}$;
    
    (4)  $\tilde{P}(s'|s,a)$ is a superadditive function on $\mathbb{S} \times \mathbb{A}$ for all  $s'\in \mathbb{S}$.
    
    We first prove the monotonicity in $r$ with $q$ fixed. The state $s$ is ordered by $r$, i.e., if $r^-\leq r^+$, we define $s^- \leq s^+$ with $s^-=(r^-,q)$ and $s^+=(r^+,q)$. 
	From the definition of one-stage cost, $c(s,a)$ is increasing in $q$. Therefore, condition (1) can be easily verified.        
    For condition (2), the superadditive function is defined in (4.7.1) of \cite{puterman2014markov}. A function $f(x,y)$ is superadditive for $x^{-} \leq x^{+}$ and $y^{-} \leq y^{+}$, if
    $f(x^{+},y^{+})+f(x^{-},y^{-}) \geq f(x^{+},y^{-})+f(x^{-},y^{+})$. Then, condition (2) can be easily verified as $c(s,a)$ is independent of $a$.
    
    Given the current state $s = (r,q)$, from \eqref{static:transision_func}, the next possible states are $s_0 \triangleq (0,0)$, $s_1 \triangleq (0,q+1)$, $s_2 \triangleq (r+1,r+1)$ and $s_3 \triangleq (r+1,q+1)$.
   Let $s'\triangleq\left\lbrace(r',q'): q\in \mathbb{N} \right\rbrace$.
    If $r'\leq r$, we define $s' \preceq s$ with $s=(r,q)$. 
    Based on \eqref{static:transision_func}, $\tilde{P}(s'|s,a)$ with different actions are given as
\begin{equation}
\tilde{P}(s'|s,a\!=\!0)\!=\!\begin{cases}
1, &\mbox{if } s' \preceq s_0\\
0, &\mbox{otherwise }
\end{cases},
\end{equation}
and
\begin{equation}
\tilde{P}(s'|s,a\!=\!1)\!=\!\begin{cases}
1, &\mbox{if } s' \preceq s_2\\
0, &\mbox{otherwise}
\end{cases}.
\end{equation}
%
    Therefore, condition (3) can be easily verified.
    
    For condition (4),    
    let $s^{+} = (r^{+},q)$, $s^{-} = (r^{-},q)$,  $r^{+} \geq r^{-}$ and  $a^{+} \geq a^{-}$ Then, we need to verify if
    $
    \tilde{P}(s'|s^{+},a^{+})+\tilde{P}(s'|s^{-},a^{-}) \geq \tilde{P}(s'|s^{+},a^{-})+\tilde{P}(s'|s^{-},a^{+}).
    $
    Based on the definitions of $\tilde{P}(s'|s,a)$, $s'$ and $s_i$, $i=0,1,2,3$, 
    condition (4) can be verified straightforwardly.
	As all four conditions hold, the monotonicity of the optimal policy in $r$ is proved. Similarly, the monotonicity of the optimal policy in $q$ can be proved. 

\section*{Appendix C: Proof of Theorem \ref{theorem:stability_markov}}

The proof consists of three steps: 1) construction of a stationary policy in state space $\mathbb{S}$, 2) 
providing useful technical lemmas for problem transformation
and 3) deriving a sufficient condition in terms of the packet error probability such that the long-term average cost of the stationary policy is bounded, completing the proof of existence condition of Theorem~\ref{theorem:stability_markov}.

Step 1. Inspired by the proof of the static channel scenario, where the constructed stationary policy \eqref{standard} is simply the policy that a retransmission is always required until the a successful transmission occurs, we consider a similar policy in the Markov channel scenario as
\begin{equation} \label{standard2}
a = \psi(\mathbf{\Omega},q,\Xi) = \begin{cases}
0, &\text{if}\ \| \mathbf{\Omega} \|_1=q\\
1, &\text{otherwise},
\end{cases}
\end{equation}
We can prove that $\psi$ is a stationary policy in the state space $\mathbb{S}$.
In what follows, we prove that the long-term average cost induced by policy $\psi$ is bounded if \eqref{stability_condition_markov} holds, completing the proof of existence condition of Theorem~\ref{theorem:stability_markov}.

As the state $s=(\mathbf{\Omega},q,\Xi)$ has $B+2$ dimensions, it is not possible to analyze the average cost directly and thus we have to reduce the dimension of the state space.

Step 2. Some useful lemmas.
\begin{lemma}
	\normalfont
	Given the policy $\psi$ and the packet loss function 
	\begin{equation} \label{Markov_channel_loss'}
	\mathbb{P}\left[\gamma_k = 0  \right] 
	=
	\left\lbrace \begin{aligned}
	& \tilde{g}(\mathbf{0},\Xi_{k}), &&a_k = 0\\
	& \tilde{g}'(\mathbf{\Omega}_k,\Xi_{k}), &&a_k = 1
	\end{aligned}
	\right.
	\end{equation}
	where $\tilde{g}(\mathbf{\Omega}_k,\Xi_{k})< \tilde{g}'(\mathbf{\Omega}_k,\Xi_{k})< 1$, $\forall k$,
	the average cost of the MDP with the packet loss function \eqref{Markov_channel_loss} is bounded, if that of \eqref{Markov_channel_loss'} is.
\end{lemma}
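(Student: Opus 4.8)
The plan is to fix the stationary policy $\psi$ of \eqref{standard2} and argue by a sample-path (coupling) comparison that raising the retransmission loss from $\tilde{g}$ to the larger $\tilde{g}'$ can only increase the cost. The enabling observation is that, under $\psi$, the one-stage cost $\text{Tr}(f^{q}(\bar{\mathbf{P}}_0))$ depends on the state only through the AoI $q$, and that $\text{Tr}(f^{n}(\bar{\mathbf{P}}_0))$ is monotonically increasing in $n$ (Lemma~3.1 of \cite{shi2012scheduling}, already invoked above). Hence it suffices to show that the AoI process $\{q_k\}$ generated with loss \eqref{Markov_channel_loss} is stochastically dominated, at every time $k$, by the AoI process $\{q_k'\}$ generated with loss \eqref{Markov_channel_loss'}.

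First I would build both chains on one probability space. The channel-state sequence $\{\Xi_k\}$ is exogenous (driven by $\mathbf{\Pi}$), so I let the two systems share it. For the detection outcomes I introduce i.i.d.\ uniforms $U_k \sim \mathrm{Unif}[0,1]$ and declare a transmission whose failure probability is $p$ to fail exactly when $U_k \le p$. Two facts then hold while the two chains occupy the same state: (i) a new transmission ($a_k=0$) carries the \emph{same} loss $\tilde{g}(\mathbf{0},\Xi_k)$ in both systems (only the retransmission branch is altered between \eqref{Markov_channel_loss} and \eqref{Markov_channel_loss'}), so its outcome is identical; and (ii) for a retransmission, $\tilde{g}(\mathbf{\Omega}_k,\Xi_k) < \tilde{g}'(\mathbf{\Omega}_k,\Xi_k)$ forces the original to fail whenever the new system fails, i.e.\ the new system's failure events contain the original's.

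Next I would establish the invariant $q_k \le q_k'$ for all $k$ by induction, ideally as an almost-sure monotone coupling. While the paths stay synchronized the invariant is immediate from (i)--(ii). At the first epoch where they separate, the original succeeds a retransmission while the new one fails; using the transition rule \eqref{q_envolution_2} together with $\|\mathbf{\Omega}\|_1 = r < q$ on the retransmission branch of $\psi$, the original jumps to AoI $r+1$ while the new jumps to $q+1 \ge r+1$, so the invariant survives the split. \textbf{The main obstacle is propagating $q_k \le q_k'$ after the paths diverge}, because the two chains then sit at \emph{distinct} states whose loss probabilities are, by hypothesis, comparable only at equal arguments and need not be comparable across different $\mathbf{\Omega}$; the simple common-uniform nesting of (ii) no longer applies. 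I would overcome this by strengthening the inductive hypothesis and choosing, step by step, the coupling of the two detection outcomes that preserves the AoI ordering through the mode-swaps (new-transmission mode versus retransmission mode), exploiting that under $\psi$ the original chain regenerates to the minimal AoI state $q=1$ at every successful new transmission, which repeatedly re-establishes the ordering trivially.

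Finally, from $q_k \le q_k'$ almost surely (hence $q_k \le_{\mathrm{st}} q_k'$) and the monotonicity of $\text{Tr}(f^{(\cdot)}(\bar{\mathbf{P}}_0))$, I obtain $\mathbb{E}[\text{Tr}(f^{q_k}(\bar{\mathbf{P}}_0))] \le \mathbb{E}[\text{Tr}(f^{q_k'}(\bar{\mathbf{P}}_0))]$ for every $k$. Averaging over $k=1,\dots,K$ and taking $\limsup_{K\to\infty}$ shows that the long-term average MSE under \eqref{Markov_channel_loss} is no larger than that under \eqref{Markov_channel_loss'}; in particular, boundedness of the latter implies boundedness of the former, which is exactly the claim. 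This lemma then lets the subsequent steps replace $\tilde{g}(\mathbf{\Omega},\Xi)$ by its state-independent worst case $\Lambda_\Xi$, collapsing the $\mathbf{\Omega}$-dimension of the state and paving the way for the spectral-radius condition \eqref{stability_condition_markov}.
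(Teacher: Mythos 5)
Your coupling is set up correctly through the first divergence, but the step you yourself flag as ``the main obstacle'' is not merely an obstacle --- it kills the invariant, and no step-by-step choice of coupling can repair it. Concretely: after the first divergence, system 1 (loss $\tilde{g}$) is in new-transmission mode while system 2 (loss $\tilde{g}'$) is still retransmitting. In exactly the regime the lemma is needed for (the proof of Theorem~3 applies it with $\tilde{g}'(\mathbf{\Omega},\Xi)=\Lambda_\Xi$, which by \eqref{inequality_1} and \eqref{max_error_markov} is still below the new-transmission error $\tilde{g}(\mathbf{0},\Xi)=\Lambda'_\Xi$), every coupling assigns the event ``system 1 fails its new transmission while system 2 succeeds its retransmission'' probability at least $\tilde{g}(\mathbf{0},\Xi)-\tilde{g}'(\mathbf{\Omega}',\Xi)>0$, because this is a difference of fixed marginal probabilities. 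After that event the modes swap: system 1 is now retransmitting, so by \eqref{q_envolution_2} its next AoI is at least $2$ no matter what happens, while system 2 is about to send a new packet. System 2 then succeeds with probability $1-\tilde{g}(\mathbf{0},\Xi)>0$ --- again a marginal event that no coupling can suppress --- and resets to $q'=1$, strictly below every value system 1 can attain. Hence $q_k\le q_k'$ fails with positive probability under \emph{any} coupling, the induction collapses, and the per-slot comparison $\mathbb{E}[\text{Tr}(f^{q_k}(\bar{\mathbf{P}}_0))]\le\mathbb{E}[\text{Tr}(f^{q_k'}(\bar{\mathbf{P}}_0))]$ is no longer justified. (Your scheme does work in the opposite regime, where $\tilde{g}'(\mathbf{\Omega},\Xi)\ge\tilde{g}(\mathbf{0},\Xi)$ everywhere, since then the mode-swap configuration is never reached; but that is not the case of interest.) For what it is worth, the paper offers no proof either --- it declares the monotonicity ``straightforward'' and omits it --- so your instinct that something substantive needs to be argued here is sound; the chosen route, however, cannot close it.

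The repair is to stop comparing realized AoI trajectories and compare expectations instead. Two workable routes: (a) prove that the expected first-passage cost --- the quantity Sennott's criterion \cite{sennott2009stochastic} actually requires in Appendix C --- is monotone in the retransmission failure probabilities, by backward induction on finite-horizon truncations; the key observation is that under $\psi$ the failure successor $(\mathbf{\Omega}+\mathbf{1}_\Xi,\,q+1,\,\Xi')$ and the success successor $(\mathbf{\Omega}+\mathbf{1}_\Xi,\,\|\mathbf{\Omega}\|_1+1,\,\Xi')$ differ only in the AoI coordinate, so once one shows the cost-to-go is nondecreasing in $q$ for fixed $(\mathbf{\Omega},\Xi)$ (itself an induction, with care at the mode boundary $\|\mathbf{\Omega}\|_1=q$), raising a failure probability moves mass toward the costlier successor and can only increase the expectation; or (b) bypass the lemma altogether: expand the expected first-passage cost of the original $\tilde{g}$-chain as a sum over success/failure paths of non-negative terms, and bound each failure factor by $\tilde{g}(\mathbf{\Omega},\Xi)\le\Lambda_\Xi$ and each success factor $1-\tilde{g}(\cdot)\le 1$ term by term. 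This path-wise expectation bound never compares two coupled systems at all, and it lands directly on the matrix series that Step~3 of Appendix~C shows to be finite under condition \eqref{stability_condition_markov}.
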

The proof is straightforward due to the fact that a larger packet error probability results in a larger average cost, and thus is omitted here.

In the following, we derive a sufficient condition that can stabilize the average cost of the MDP with the packet loss function
\begin{equation} \label{'}
\mathbb{P}\left[\gamma_k = 0  \right] 
=
\left\lbrace \begin{aligned}
& \Lambda'_i = \tilde{g}(\mathbf{0},\Xi_{k}), &&a_k = 0, \Xi_k=i\\
& \Lambda_i \geq \tilde{g}(\mathbf{\Omega}_k,\Xi_{k}), &&a_k = 1, \Xi_k=i.
\end{aligned}
\right.
\end{equation}
Since the packet error function \eqref{'} does not depends on the individual elements of the state vector $\mathbf{\Omega}_k$, the state space $\mathbb{S}$ is reduced to a three-dimensional space $\mathbb{S}' \triangleq \{(r, q, \Xi) : r \leq q,\ (r, q) \in \mathbb{N} \times \mathbb{N}, \Xi\in \{1,2,...,B\}\}$, where $r\triangleq \|\mathbf{\Omega}\|_1$.

Again using \cite[Corollary 7.5.10]{sennott2009stochastic}, we only need to proof that both (CAV*1) and (CAV*2) hold in the new state space $\mathbb{S}'$ under the condition \eqref{stability_condition_markov}. 
Similar to the proof of Theorem~\ref{theorem:existence}, it can be easily proved that the recurrent class $R_{\psi}$ induced by $\psi$ is equal to the whole state space $\mathbb{S}'$, and (CAV*2) holds. Thus, in what follows, we only need to prove that the policy $\psi$ is a standard policy, i.e., both the expected first passage cost and time from any state in $\mathbb{S}'$ to the state $(1,1,1)$ is bounded if \eqref{stability_condition_markov} is satisfied.

\begin{lemma} \label{lem:combine_states}
	\normalfont
	If the first passage cost from any state $s \in \mathbb{S}'$ to 
	the set of states 
	\begin{equation}
	\mathbf{s} \triangleq \{(1,1,1),(1,1,2),...,(1,1,B)\}
	\end{equation} 
	is bounded, the first passage cost from any state $s \in \mathbb{S}'$ to
	state $(1,1,1)$ is bounded.
\end{lemma}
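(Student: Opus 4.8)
The plan is to split the first passage to the single target state $(1,1,1)$ into two phases: first reach the set $\mathbf{s}$, which is cheap by hypothesis, and then, from inside $\mathbf{s}$, wait until the channel-state component happens to equal $1$ at a visit. The structural fact that drives everything is that, by the transition law \eqref{transition_function_markov}, the channel index $\Xi_k$ evolves as an \emph{autonomous} finite Markov chain with transition matrix $\mathbf{\Pi}$: the factor $p_{\Xi,\Xi'}$ appears in every branch, independently of the action and of the packet outcome. Hence landing in $\mathbf{s}$ (i.e.\ $r=q=1$) only pins down $r$ and $q$, while $\Xi$ must still be steered to $1$ — which is exactly the gap between ``reach the set'' and ``reach the point.''

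First I would record the one-step behaviour of $\psi$ on $\mathbf{s}$: at any $(1,1,j)$ we have $\|\mathbf{\Omega}\|_1=q=1$, so \eqref{standard2} selects $a=0$, and from \eqref{transition_function_markov} a success (probability $1-\tilde g(\mathbf{0},j)$) moves to $(1,1,\Xi')$ with $\Xi'\sim p_{j,\cdot}$ — again in $\mathbf{s}$ — whereas a failure moves to $(1,2,\Xi')\notin\mathbf{s}$. Since each one-stage cost is at least $\text{Tr}(f(\bar{\mathbf{P}}_0))>0$, the finiteness of the first-passage cost to $\mathbf{s}$ from every state forces a finite first-passage \emph{time}, so $\mathbf{s}$ is revisited infinitely often and the successive visit times $\tau_0<\tau_1<\cdots$ are a.s.\ finite. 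By the strong Markov property of the joint chain, and because inside $\mathbf{s}$ only $\Xi$ varies, the embedded sequence $\{\Xi_{\tau_n}\}$ is itself a Markov chain on $\{1,\dots,B\}$; its transition probability from $j$ to $i$ is at least $p_{j,i}(1-\tilde g(\mathbf{0},j))$, so every edge of $\mathbf{\Pi}$ lifts to an edge of the visit chain and irreducibility of $\mathbf{\Pi}$ transfers. Consequently $N\triangleq\min\{n:\Xi_{\tau_n}=1\}$, the number of visits needed to first land on $(1,1,1)$, has finite expectation, and the first hitting time of $(1,1,1)$ equals $\tau_N$.

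Next I would bound a single excursion between consecutive visits. Starting from $(1,1,j)$, a success ends the excursion in one step, while a failure lands at $(1,2,\Xi')\in\mathbb{S}'$, from which the hypothesis guarantees a finite first-passage cost back to $\mathbf{s}$. Taking the maximum over the finitely many $j,\Xi'\in\{1,\dots,B\}$ yields a finite uniform bound $\bar E$ on the expected excursion cost, and by the strong Markov property $\mathbb{E}[E_n\mid\mathcal{F}_{\tau_n}]\le\bar E$, where $E_n$ is the cost accrued from $\tau_n$ to $\tau_{n+1}$. The final step combines the pieces: the first-passage cost from an arbitrary $s$ to $(1,1,1)$ is at most the (finite, by hypothesis) first-passage cost from $s$ to $\mathbf{s}$ plus $\sum_{n=0}^{N-1}E_n$. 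I would bound the tail sum by $\sum_{n\ge0}\mathbb{E}[E_n\mathbf{1}_{\{N>n\}}]$; since $\{N>n\}$ is measurable with respect to the history up to $\tau_n$, the tower property gives $\mathbb{E}[E_n\mathbf{1}_{\{N>n\}}]\le\bar E\,\mathbb{P}[N>n]$, hence a total of at most $\bar E\,\mathbb{E}[N]<\infty$.

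The main obstacle is the dependence between the waiting index $N$ and the random excursion costs $E_n$, since both are driven by the same channel trajectory; this rules out a naive product of independent expectations, and the tower-property bound above is the clean way around it. The secondary technical point is justifying that $\{\Xi_{\tau_n}\}$ is a Markov chain and that its irreducibility follows from that of $\mathbf{\Pi}$ — both hinge on the autonomy of the channel in \eqref{transition_function_markov} together with the standing irreducibility of the finite-state channel.
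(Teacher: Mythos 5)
Your proof is correct, and it rests on the same two-phase excursion decomposition as the paper's: reach the set $\mathbf{s}$ first (cheap by hypothesis), then follow the successive returns to $\mathbf{s}$ until the channel index hits $1$. The execution, however, differs in both halves. The paper writes the cost to $(1,1,1)$ as an explicit path sum over the sequence of states of $\mathbf{s}$ visited, using path probabilities $\varrho_{s,i}$ and \emph{conditional} excursion costs $A_{s,i}$, $A_{i,j}$; finiteness of these conditional costs is extracted from the hypothesis via $A_{s,i}\leq A_{s,\mathbf{s}}/\varrho_{s,i}$, and the series is dominated by $\sum_{n} n\, A_{\max}(1-\varrho_{\min})^{n-1}$, which needs $\varrho_{i,1}\geq \varrho_{\min}>0$ for every $i$ --- justified there by taking all entries $p_{i,j}$ of $\mathbf{\Pi}$ strictly positive. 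You instead bound the \emph{unconditional} expected excursion cost by $\bar{E}$ (one stage plus, on failure, the first-passage cost from $(1,2,\Xi')$ back to $\mathbf{s}$), resolve the dependence between the excursion costs and the stopping index $N$ by the tower property with $\{N>n\}\in\mathcal{F}_{\tau_n}$, and obtain $\mathbb{E}[N]<\infty$ from irreducibility of the embedded chain $\{\Xi_{\tau_n}\}$, which inherits every edge of $\mathbf{\Pi}$ through the success branch of the transition kernel. What your route buys: it only requires $\mathbf{\Pi}$ to be irreducible rather than entrywise positive, and the Wald-type argument is measure-theoretically tighter than the paper's path sum, where the dependence issue you flag is handled only implicitly through conditioning and the $A_{i,j}$ are left somewhat informal. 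What the paper's route buys: a fully explicit geometric tail, with the expected number of excursions bounded by $1/\varrho_{\min}$, instead of an unquantified finite expectation of a hitting time of a finite irreducible chain. Both arguments close the same gap --- that landing in $\mathbf{s}$ pins down $(r,q)$ but not $\Xi$ --- so the proposal stands as a valid, slightly more general proof of the lemma.
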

\begin{proof}
Let $A_{s,\mathbf{s}}$ be the expected first passage cost from $s$ to the set $\mathbf{s}$. We have
\begin{equation}
A_{s,\mathbf{s}} = \sum_{i=1}^{B} \varrho_{s,i} A_{s,i},
\end{equation}
where 	
$A_{s,i}$ is the expected first passage cost from $s$ to the set $\mathbf{s}$ with the condition that the first visited state in $\mathbf{s}$ is $(1,1,i)$, and $\varrho_{s,i}$ is the probability that the first visited state from $s$ to the set $\mathbf{s}$ is $(1,1,i)$.

As the state transition probabilities $p_{i,j}$ in \eqref{P_matrx} are larger than zero, it is clear that $\varrho_{s,i} \in (0,1)$. Therefore, if $A_{s,\mathbf{s}}$ is bounded $\forall s\in \mathbf{s}$, $A_{s,i}$ is bounded $\forall s\in \mathbf{s}, i\in\{1,...,B\}$.

The average first passage cost from $s$ to the state $(1,1,1)$ can be written as
\begin{equation} \label{U_inequality}
\begin{aligned}
&U_{s,1} 
= \varrho_{s,1} A_{s,1} 
+ \sum_{i=2}^{B} \varrho_{s,i}\varrho_{i,1} \left(A_{s,i}+A_{i,1}\right)
+ \sum_{j=2}^{B}\sum_{i=2}^{B} \varrho_{s,j}\varrho_{j,i}\varrho_{i,1} \left(A_{s,j}+A_{j,i}+A_{i,1}\right) \\
&+ \sum_{j=2}^{B}\sum_{i=2}^{B}\sum_{k=2}^{B} \varrho_{s,j}\varrho_{j,k}\varrho_{k,i}\varrho_{i,1} \left(A_{s,j}+A_{j,k}+A_{k,i}+A_{i,1}\right)+ \cdots\\
&< \varrho_{s,1} A_{\max} +(1-\varrho_{s,1}) 2 A_{\max}  +(1-\varrho_{s,1})(1-\varrho_{\min})  3 A_{\max}  +(1-\varrho_{s,1})(1-\varrho_{\min})^2  4 A_{\max} \cdots
\end{aligned}
\end{equation}
where $A_{i,j}$ is the expected first passage cost from $(1,1,i)$ to $(1,1,j)$, and the $\varrho_{j,k}$ is the probability that the first visited state from $(1,1,j)$ to the set $\mathbf{s}$ is $(1,1,k)$.
\begin{equation}
A_{\max} = \max_{i,j = 1,2,...,B }\{A_{s,i},A_{i,j}\},
\text{ and }
\varrho_{\min} = \min_{i=1,2,...,B} \{\varrho_{i,1}\}.
\end{equation}
Since $(1-\varrho_{\min})<1$ and $A_{\max}<\infty$, it is straightforward that the right-hand side of the inequality \eqref{U_inequality} is bounded, completing the proof.
\end{proof}

Using Lemma~\ref{lem:combine_states}, in the following, we only need to prove that if \eqref{stability_condition_markov} holds, the average first passage cost from any state to the state set $\mathbf{s}$ is bounded, which is also a sufficient condition that guarantees the average first passage cost from any state to a specific state, say $(1,1,1)$, is bounded.

Step 3.
We define a length-$B$ row vector $\vec{d}(i), \forall i\in \mathbb{N}$, where the $k$th element of $\vec{d}(i)$ is the average first passage cost from state $(i,i,k)$ to the state set $\mathbf{s}$.
Analogously to the static channel scenario~\eqref{equation_group}, we have the following equation
\begin{equation} \label{equation_group2}
\begin{aligned}
&\vec{d}(i) = \vec{c}(i) + \vec{c}(1)(\mathbf{\Pi}(\mathbf{I}-\mathbf{\Lambda}')) + \vec{c}(i+1)\mathbf{\Pi}\mathbf{\Lambda}' \\
&+ \vec{d}(2) (\mathbf{\Pi}(\mathbf{I}-\mathbf{\Lambda})) \mathbf{\Pi}(\mathbf{\Lambda}')   + \vec{c}(i+2)(\mathbf{\Pi}(\mathbf{\Lambda})) \mathbf{\Pi}(\mathbf{\Lambda}') \\
&+ \vec{d}(3)(\mathbf{\Pi}(\mathbf{I}-\mathbf{\Lambda}))(\mathbf{\Pi}(\mathbf{\Lambda})) \mathbf{\Pi}(\mathbf{\Lambda}')  + \vec{c}(i+3)(\mathbf{\Pi}(\mathbf{\Lambda}))(\mathbf{\Pi}(\mathbf{\Lambda})) \mathbf{\Pi}(\mathbf{\Lambda}')  + \cdots\\
&= \vec{\nu}(i) +\vec{c}(1) \mathbf{\Pi}(\mathbf{I}-\mathbf{\Lambda}')+ \vec{D}, \forall i >0,
\end{aligned}
\end{equation}
where $\mathbf{I}$ is the $B$-by-$B$ identity matrix, 
$
\mathbf{\Lambda}' \triangleq \text{diag}\left\lbrace \Lambda'_1,..., \Lambda'_B \right\rbrace$, 
$\vec{c}(i) \triangleq {c}(i) \vec{1}$, 
$\vec{1}\triangleq  \underbrace{\left[1,1,...,1\right]}_{B}$,
and 
\begin{align}
\label{h2}
&\vec{\nu}(i) = \vec{c}(i) +\sum_{j=1}^{\infty}  \vec{c}(i+j) \check{\alpha}_j,\\
& \vec{D}= \sum_{j=2}^{\infty}  \vec{d}(j) \check{\beta}_j,\\  \label{alpha}
& \check{\alpha}_j =  \left(\mathbf{\Pi}\mathbf{\Lambda}\right)^{j-1}\mathbf{\Pi}\mathbf{\Lambda'}, j \geq 1\\ \label{beta}
&\check{\beta}_j=\mathbf{\Pi}\left(\mathbf{I}-\mathbf{\Lambda}\right)\left(\mathbf{\Pi}\mathbf{\Lambda}\right)^{j-2}\mathbf{\Pi}\mathbf{\Lambda'}, j\geq 2.
\end{align}
Therefore, $\vec{d}(i)$ is bounded if $\nu(i)< \infty$ and $D< \infty$. 

From the definitions in \eqref{alpha} and \eqref{beta}, we have the inequalities
\begin{equation} \label{inequality_alphabeta2}
\begin{aligned}
&\check{\alpha}_j \preceq \kappa_1 \left[\rho^{j-1}(\mathbf{\Pi}\mathbf{\Lambda})\right]_{B \times B}, &&j\geq 1\\
&\check{\beta}_j \preceq \kappa_2 \left[\rho^{j-2}(\mathbf{\Pi}\mathbf{\Lambda})\right]_{B \times B}, &&j\geq 2\\
&\check{\alpha}_i \check{\beta}_j \preceq \kappa_3 \left[\rho^{i+j-3}(\mathbf{\Pi}\mathbf{\Lambda})\right]_{B \times B}, &&i\geq 1, j\geq 2
\end{aligned}
\end{equation}
where $\kappa_1$, $\kappa_2$ and $\kappa_3$ are positive constant.
From~\cite{shi2012optimal}, we have the inequality of the cost function
\begin{equation}\label{inequal_c}
c(j)\leq \kappa \rho^{2j}(\mathbf{A}),
\end{equation}
where $\kappa$ is a positive constant.
Thus, using the inequalities \eqref{inequality_alphabeta2} and \eqref{inequal_c}, it is easy to prove that $\vec{\nu}(i) < \infty$ if \eqref{stability_condition_markov} holds.

From \eqref{equation_group2}, we further have
\begin{equation} \label{D}
\vec{D} \left(\mathbf{I}-\sum_{i=2}^{\infty}\check{\beta}_i\right)  = \vec{c}(1) \mathbf{\Pi}(\mathbf{I}-\mathbf{\Lambda}') \sum_{i=2}^{\infty} \check{\beta}_i + \sum_{i=2}^{\infty} \vec{\nu}(i)  \check{\beta}_i.
\end{equation}
Using the result of sum of a geometric series of matrices, it can be obtained as
\begin{equation}
\sum_{i=2}^{\infty}\check{\beta}_i = 
\mathbf{\Pi}(\mathbf{I}-\mathbf{\Lambda})(\mathbf{I}-\mathbf{\Pi}\mathbf{\Lambda})^{-1}\mathbf{\Pi}\mathbf{\Lambda}'<\infty.
\end{equation}
From \eqref{beta}, it is clear that each element in $\sum_{i=2}^{\infty}\check{\beta}_i$ is positive, and thus the matrix $\mathbf{I}-\sum_{i=2}^{\infty}\check{\beta}_i$ does not have an all-zero row.
Therefore, $\vec{D}$ is bounded if the term $\sum_{i=2}^{\infty} \vec{\nu}(i)  \check{\beta}_i$ on the right-hand side of \eqref{D} is.

Using the inequalities~\eqref{inequality_alphabeta2} and \eqref{inequal_c}, after some simplifications, we have
\begin{equation} \label{final}
\begin{aligned}
\sum_{i=2}^{\infty} \vec{\nu}(i)  \check{\beta}_i
\preceq
\kappa \kappa_2 \sum_{i=2}^{\infty} \rho^{2i}(\mathbf{A}) \vec{1} \left[\rho^{i-2}(\mathbf{\Pi}\mathbf{\Lambda})\right]_{B \times B} +
\kappa\kappa_3 \sum_{i=3}^{\infty}  (i-2) \rho^{2i}(\mathbf{A}) \vec{1} \left[\rho^{i-3}(\mathbf{\Pi}\mathbf{\Lambda})\right]_{B \times B}.
\end{aligned}
\end{equation}
Therefore, it is clear that the right-hand side of \eqref{final} is bounded if $\rho(\mathbf{\Pi}\mathbf{\Lambda}) \rho^{2}(\mathbf{A}) <1$,
yielding the proof of the bounded expected first passage cost with condition~\eqref{stability_condition_markov}. Similarly, we can verify that the expected first passage time
is also bounded, completing the proof.

    \balance
    
	\ifCLASSOPTIONcaptionsoff
	\newpage
	\fi

	\bibliographystyle{IEEEtran}

\begin{thebibliography}{10}
	\providecommand{\url}[1]{#1}
	\csname url@samestyle\endcsname
	\providecommand{\newblock}{\relax}
	\providecommand{\bibinfo}[2]{#2}
	\providecommand{\BIBentrySTDinterwordspacing}{\spaceskip=0pt\relax}
	\providecommand{\BIBentryALTinterwordstretchfactor}{4}
	\providecommand{\BIBentryALTinterwordspacing}{\spaceskip=\fontdimen2\font plus
		\BIBentryALTinterwordstretchfactor\fontdimen3\font minus
		\fontdimen4\font\relax}
	\providecommand{\BIBforeignlanguage}[2]{{%
			\expandafter\ifx\csname l@#1\endcsname\relax
			\typeout{** WARNING: IEEEtran.bst: No hyphenation pattern has been}%
			\typeout{** loaded for the language `#1'. Using the pattern for}%
			\typeout{** the default language instead.}%
			\else
			\language=\csname l@#1\endcsname
			\fi
			#2}}
	\providecommand{\BIBdecl}{\relax}
	\BIBdecl
	
	\bibitem{Kang2019ICC}
	K.~Huang, W.~Liu, Y.~Li, and B.~Vucetic, ``To retransmit or not: Real-time
	remote estimation in wireless networked control,'' in \emph{Proc. IEEE ICC},
	May 2019, pp. 1--7.
	
	\bibitem{Antonakoglou2018towards}
	K.~Antonakoglou, X.~Xu, E.~Steinbach, T.~Mahmoodi, and M.~Dohler, ``Towards
	haptic communications over the {5G} tactile {Internet},'' \emph{to appear in
		IEEE Commun. Surveys Tuts.}, 2018.
	
	\bibitem{kaul2012real}
	S.~Kaul, R.~Yates, and M.~Gruteser, ``Real-time status: How often should one
	update?'' in \emph{Proc. IEEE INFOCOM}, Mar. 2012, pp. 2731--2735.
	
	\bibitem{schenato2008optimal}
	L.~Schenato, ``Optimal estimation in networked control systems subject to
	random delay and packet drop,'' \emph{IEEE Trans. Autom. Control}, vol.~53,
	no.~5, pp. 1311--1317, Jun. 2008.
	
	\bibitem{sun2017remote}
	Y.~Sun, Y.~Polyanskiy, and E.~Uysal-Biyikoglu, ``Remote estimation of the
	wiener process over a channel with random delay,'' in \emph{Proc. IEEE ISIT},
	Jun. 2017, pp. 321--325.
	
	\bibitem{liu2016energy}
	W.~Liu, X.~Zhou, S.~Durrani, H.~Mehrpouyan, and S.~D. Blostein, ``Energy
	harvesting wireless sensor networks: Delay analysis considering energy costs
	of sensing and transmission,'' \emph{IEEE Trans. Wireless Commun.}, vol.~15,
	no.~7, pp. 4635--4650, Jul. 2016.
	
	\bibitem{Demirel2015to}
	B.~Demirel, A.~Aytekin, D.~E. Quevedo, and M.~Johansson, ``To wait or to drop:
	On the optimal number of retransmissions in wireless control,'' in
	\emph{Proc. ECC}, Jul. 2015, pp. 962--968.
	
	\bibitem{gupta2010estimation}
	V.~Gupta, ``On estimation across analog erasure links with and without
	acknowledgements,'' \emph{IEEE Trans. Autom. Control}, vol.~55, no.~12, pp.
	2896--2901, Dec. 2010.
	
	\bibitem{caire2001throughput}
	G.~Caire and D.~Tuninetti, ``The throughput of hybrid-{ARQ} protocols for the
	{Gaussian} collision channel,'' \emph{IEEE Trans. Inf. Theory}, vol.~47,
	no.~5, pp. 1971--1988, Jul. 2001.
	
	\bibitem{shi2012optimal}
	L.~Shi and L.~Xie, ``Optimal sensor power scheduling for state estimation of
	{Gauss-Markov} systems over a packet-dropping network,'' \emph{IEEE Trans.
		Signal Process.}, vol.~60, no.~5, pp. 2701--2705, May 2012.
	
	\bibitem{yang2013schedule}
	C.~Yang, J.~Wu, W.~Zhang, and L.~Shi, ``Schedule communication for
	decentralized state estimation,'' \emph{IEEE Trans. Signal Process.},
	vol.~61, no.~10, pp. 2525--2535, May 2013.
	
	\bibitem{maybeck1979stochastic}
	P.~S. Maybeck, \emph{Stochastic models, estimation, and control}.\hskip 1em
	plus 0.5em minus 0.4em\relax Academic press, 1982, vol.~3.
	
	\bibitem{shi2012scheduling}
	L.~Shi and H.~Zhang, ``Scheduling two {Gauss-Markov} systems: An optimal
	solution for remote state estimation under bandwidth constraint,'' \emph{IEEE
		Trans. Signal Process.}, vol.~60, no.~4, pp. 2038--2042, Apr. 2012.
	
	\bibitem{markovchannel}
	J.~Razavilar, K.~R. Liu, and S.~I. Marcus, ``Jointly optimized bit-rate/delay
	control policy for wireless packet networks with fading channels,'' vol.~50,
	no.~3, pp. 484--494, Mar. 2002.
	
	\bibitem{Polyanskiy}
	Y.~Polyanskiy, H.~V. Poor, and S.~Verdu, ``Channel coding rate in the finite
	blocklength regime,'' \emph{{IEEE} Trans. Inf. Theory}, vol.~56, no.~5, pp.
	2307--2359, May 2010.
	
	\bibitem{li2017throughput}
	Y.~Li, M.~C. Gursoy, and S.~Velipasalar, ``Throughput of {HARQ-IR} with finite
	blocklength codes and {QoS} constraints,'' in \emph{Proc. IEEE ISIT}, Jun.
	2017, pp. 276--280.
	
	\bibitem{puterman2014markov}
	M.~L. Puterman, \emph{Markov decision processes: discrete stochastic dynamic
		programming}.\hskip 1em plus 0.5em minus 0.4em\relax John Wiley \& Sons,
	2014.
	
	\bibitem{ceran2018average}
	E.~T. Ceran, D.~G{\"u}nd{\"u}z, and A.~Gy{\"o}rgy, ``Average age of information
	with hybrid {ARQ} under a resource constraint,'' in \emph{Proc. IEEE WCNC},
	Apr. 2018, pp. 1--6.
	
	\bibitem{matlab}
	\BIBentryALTinterwordspacing
	\emph{Markov Decision Processes (MDP) Toolbox for MATLAB}, 2015. [Online].
	Available:
	\url{https://au.mathworks.com/matlabcentral/fileexchange/25786-markov-decision-processes-mdp-toolbox}
	\BIBentrySTDinterwordspacing
	
	\bibitem{littman1995complexity}
	M.~L. Littman, T.~L. Dean, and L.~P. Kaelbling, ``On the complexity of solving
	{Markov} decision problems,'' in \emph{Proc. Association for Uncertainty in
		AI (AUAI)}, 1995, pp. 394--402.
	
	\bibitem{sennott2009stochastic}
	L.~I. Sennott, \emph{Stochastic dynamic programming and the control of queueing
		systems}.\hskip 1em plus 0.5em minus 0.4em\relax John Wiley \& Sons, 2009,
	vol. 504.
	
\end{thebibliography}

\end{document}